\newtheorem{definition}{Definition}[section]
\newtheorem{example}{Example}
\newcommand{\probP}{\text{I\kern-0.15em P}}
\newtheorem{corollary}{Corollary}
\newtheorem{remark}{Remark}
\newtheorem{lemma}{Lemma}
\Crefname{remark}{Remark}{Remarks}
\Crefname{rmk}{Remark}{Remarks}
\Crefname{dfn}{Definition}{Definitions}
\Crefname{thm}{Theorem}{Theorems}
\Crefname{cor}{Corollary}{Corollaries}
\Crefname{lem}{Lemma}{Lemmas}
\Crefname{example}{Example}{Examples}
\Crefname{prop}{Proposition}{Propositions}
\Crefname{tab}{Table}{Tables}
\newcommand{\DA}{\textrm{DA}}
\title{Putting Gale \& Shapley to Work:\\ Guaranteeing Stability Through Learning}
\author{%
Hadi Hosseini\\  
Penn State University, USA\\ 
\texttt{hadi@psu.edu}\\
\And
Sanjukta Roy\\
University of Leeds, UK\\
\texttt{s.roy@leeds.ac.uk}\\
\And
Duohan Zhang\thefootnote{*}\\
Penn State University, USA\\ 
\texttt{dqz5235@psu.edu} \\
}
\begin{document}

 \def\thefootnote{*}\footnotetext{Corresponding author}\def\thefootnote{\arabic{footnote}}

\maketitle

\begin{abstract}

Two-sided matching markets describe a large class of problems wherein participants from one side of the market must be matched to those from the other side according to their preferences. In many real-world applications (e.g. content matching or online labor markets), the knowledge about preferences may not be readily available and must be learned, i.e., one side of the market (aka agents) may not know their preferences over the other side (aka arms). Recent research on online settings has focused primarily on welfare optimization aspects (i.e. minimizing the overall regret) while paying little attention to the game-theoretic properties such as the stability of the final matching. In this paper, we exploit the structure of stable solutions to devise algorithms that improve the likelihood of finding stable solutions. We initiate the study of the sample complexity of finding a stable matching, and provide theoretical bounds on the number of samples needed to reach a stable matching with high probability. Finally, our empirical results demonstrate intriguing tradeoffs between stability and optimality of the proposed algorithms, further complementing our theoretical findings.

\end{abstract}

\section{Introduction}

Two-sided markets provide a framework for a large class of problems that deal with matching two disjoint sets (colloquially agents and arms) according to their preferences.
These markets have been extensively studied in the past decades and formed the foundation of matching theory---a prominent subfield of economics that deals with designing markets without money.
They have had profound impact on numerous practical applications such as school choice~\citep{APR05new,APR+05boston}, entry-level labor markets~\citep{RP99redesign}, and medical residency~\citep{R84evolution}.
The primary objective is to find a \textit{stable} matching between the two sets such that no pair prefers each other to their matched partners.

The advent of digital marketplaces and online systems has given rise to novel applications of two-sided markets such as matching riders and drivers~\citep{banerjee2019ride}, electric vehicle to charging stations~\citep{gerding2013two}, and matching freelancers (or flexworkers) to job requester in a gig economy.
In contrast to traditional markets that consider preferences to be readily available (e.g. by direct reporting or elicitation), in these new applications preferences may be uncertain or unavailable due to limited access or simply eliciting may not be feasible.
Thus, a recent line of work has utilized \textit{bandit learning} to learn preferences by modeling matching problems as \textit{multi-arm bandit} problems where the preferences of agents are unknown while the preferences of arms are known. 
The goal is to devise learning algorithms such that a matching based on the learned preferences minimize the regret for each agent (see, for example, \cite{pmlr-v108-liu20c, liu2021bandit, pmlr-v130-sankararaman21a, pmlr-v139-basu21a, NEURIPS2022_615ce9f0, kong2022thompson, NEURIPS2022_3e36cbff, doi:10.1137/1.9781611977554.ch55, wang2022bandit}). 

Despite tremendous success in improving the regret bound in this setting, the study of stability of the final matching has not received sufficient attention.
%
The following stylized example illustrates how an optimal matching (with zero regret) across all agents may remain unstable.


\begin{example}
Consider three agents $\{a_1,a_2,a_3\}$ and three arms $\{ b_1,b_2,b_3\}$. Let us assume true preferences are given as strict linear orderings\footnote{We consider a general case where preferences are given as cardinal values; note that cardinal preferences can induce (possibly weak) ordinal linear orders.} as follows:

\begin{minipage}{0.45\linewidth}
\begin{align*}
a_1 &: b_1^* \succ \underline{b_2} \succ b_3 \\
a_2 &: b_2^* \succ \underline{b_1} \succ b_3 \\
a_3 &: b_1 \succ b_2 \succ \underline{b_3^*}
\end{align*}
\end{minipage}
\hfill
\begin{minipage}{0.45\linewidth}
\begin{align*}
b_1 &: \underline{a_2} \succ a_3 \succ a_1^* \\
b_2 &: \underline{a_1} \succ a_3 \succ a_2^* \\
b_3 &: a_1 \succ a_2 \succ \underline{a_3^*}
\end{align*}
\end{minipage}

The underlined matching is the \textit{\underline{only}} stable solution in this instance.
The matching denoted by $^*$ is a regret minimizing matching:  agents $a_1$ and $a_2$ have a negative regret (compared to the stable matching), and $a_3$ has zero regret.
However, this matching is not stable because $a_3$ and $b_1$ form a blocking pair.
Thus, $a_3$ would deviate from the matching.
\end{example}

Note that stability is a desirable property that eliminates the incentives for agents to participate in secondary markets, and is the essential predictor of the long-term reliability of many real-world matching markets~\citep{roth2002economist}.
Although some work (e.g. \cite{liu2021bandit, pokharel2023converging}) analyzed stability, it is insufficient as we discuss later in \Cref{sec:unique} and \Cref{sec:uniform}.

\paragraph{\textbf{Our contributions.}}
We propose bandit-learning algorithms that utilize structural properties of the Deferred Acceptance algorithm (DA)---a seminal algorithm proposed by \citet{gale1962college} that has played an essential role in designing stable matching markets.
Contrary to previous works \citep{pmlr-v108-liu20c,doi:10.1137/1.9781611977554.ch55}, we show that by exploiting an arm-proposing variant of the DA algorithm, the probability of finding a stable matching improves compared to previous works (that used an agent-proposing DA, such as \cite{pmlr-v108-liu20c, pmlr-v139-basu21a,doi:10.1137/1.9781611977554.ch55}).
We demonstrate that for a class of profiles (i.e. profiles satisfying $\alpha$-condition or those with a masterlist), the arm-proposing \DA{} is more likely to produce a stable matching compared to the agent-proposing DA for \textit{any} sampling method (\Cref{corr:alpha}). 
For the commonly studied uniform sampling strategy, we show the probability bounds for two variants of DA for general preference profiles (\Cref{thm:general_case}).
We initiate the study of sample complexity in the \textit{Probably Approximately Correct} (PAC) framework. 
We propose a non-uniform sampling strategy which is based on arm-proposing DA and Action Elimination algorithm (AE) \citep{even2006action}, and show that it has a lower sample complexity as compared to uniform sampling (\Cref{thm:sample-uni} and \Cref{thm:sample-ae}). 
Lastly, we validate our theoretical findings using empirical simulations (\Cref{sec:experiments}).

Table \ref{tab:results} shows the main theoretical results for uniform agent-DA algorithm, uniform arm-DA algorithm, and AE arm-DA algorithm.
We note that the novel AE arm-DA algorithm achieves smaller sample complexity for finding a stable matching.
Our bounds depend on structure of the stable solution $m$, parameterized by the `amount' of justified envy $ES(m)$ (see \Cref{def:envy-set}).

\begin{table*}[t]
\centering
\scriptsize
\begin{tabular}{l l l l }
 & \textbf{Uniform agent-DA} & \textbf{Uniform arm-DA} & \textbf{AE arm-DA} \\ 
 \toprule
Prob. instability & $O(|ES(\overline{m})| \gamma)$ (Thm. \ref{thm:general_case})& $O(|ES(\underline{m})|\gamma)$ (Thm. \ref{thm:general_case}) & $O( |ES(\underline{m})|\exp\left(-\frac{\Delta^2 T_{min}}{8}\right))$ (Thm. \ref{thm:ae_stab}) \\ \midrule
Sample complexity & $\tilde{O}(\frac{NK}{\Delta^2} \log(\alpha^{-1}))$ (Thm. \ref{thm:sample-uni}) & $\tilde{O}(\frac{NK}{\Delta^2} \log(\alpha^{-1}))$ (Thm. \ref{thm:sample-uni}) & $\tilde{O}(\frac{|ES(\underline{m})|}{\Delta^2} \log(\alpha^{-1}))$ (Thm. \ref{thm:sample-ae}) \\ 
\bottomrule
\end{tabular}
\caption{Comparison of bounds on probability of unstable matchings and the sample complexity to find a stable matching. 
$\gamma = \exp\left(-\frac{\Delta^2 T}{8K}\right)$.
}
\label{tab:results}
\end{table*}

\subsection{Related Works}

\paragraph{Stable matching markets.}
The two-sided matching problem is one of the most prominent success story of the field of game theory, and in particular, mechanism design, with a profound practical impact in applications ranging from organ exchange and labor market to modern markets involving allocation of compute, server, or content.
The framework was formalized by \citet{gale1962college}'s seminal work, where they, along with a long list of subsequent works focused primarily on game-theoretical aspects such as stability and incentives \citep{roth1992two,roth1986allocation,dubins1981machiavelli}.
The deferred-acceptance (DA) algorithm is an elegant procedure that guarantees a stable solution and can be computed in polynomial time \citep{gale1962college}.
While the DA algorithm is strategyproof for the proposing side \citep{gale1962college}, no stable mechanism can guarantee that agents from \textit{both} sides have incentives to report preferences truthfully in a dominant strategy Nash equilibrium \citep{roth1982economics}. A series of works focused on strategic aspects of stable matchings \citep{huang2006cheating,teo2001gale,vaish2017manipulating,hosseini2021accomplice}.

\vspace{-1em}
\paragraph{Bandit learning in matching markets.}
\cite{pmlr-v108-liu20c} formalized the centralized and decentralized bandit learning problem in matching markets.
In this problem, one side of participants (agents) have unknown preferences over the other side (arms), while arms have known preferences over agents.
They introduced a centralized uniform sampling algorithm that achieves $O(\frac{K \log(T)}{\Delta ^2})$ agent-optimal stable regret for each agent, considering the \textit{time horizon} $T$ and the minimum preference gap $\Delta$.
For decentralized matching markets, \citet{pmlr-v130-sankararaman21a} showed that there exists an instance such that the regret for agent $a_i$ is lower bounded as $\Omega(\max \{\frac{(i-1) \log(T)}{\Delta^2}, \frac{K \log(T)}{\Delta}\})$.
Some research~\citep{pmlr-v130-sankararaman21a, pmlr-v139-basu21a, NEURIPS2022_615ce9f0} focused on special preference profiles where a unique stable matching exists.
Recently, \citet{doi:10.1137/1.9781611977554.ch55, NEURIPS2022_3e36cbff} both unveiled decentralized algorithms to achieve a near-optimal regret bound $O(\frac{KlogT}{\Delta^2})$, embodying the exploration-exploitation trade-off central to reinforcement learning and multi-armed bandit problems.


Other works focused on different variants of the bandit learning in matching market problems. 
\citet{das2005two, zhangdecentralized} studied the bandit problem in matching markets, where both sides of participants have unknown preferences. 
\citet{wang2022bandit, kong2024improved} generalized the one-to-one setting to many-to-one matching markets under the bandit framework.
\citet{kong2024improved} proposed an ODA algorithm that utilized a similar idea of arm-proposing DA variant compared to the AE arm-DA algorithm in this paper, however, the ODA algorithm achieved $O(\frac{NK}{\Delta^2}\log(T))$ regret bound in the one-to-one setting, which is $O(N)$ worse than the state-of-the-art algorithms in one-to-one matching (e.g. \cite{doi:10.1137/1.9781611977554.ch55, NEURIPS2022_3e36cbff}).
The performance of the ODA algorithm is hindered by unnecessary agent pulls, which result from insufficient communication within the market. 
\citet{jagadeesan2021learning} studied stability of bandit problem in matching markets with monetary transfer.
\citet{min2022learn} studied Markov matching markets by considering unknown transition functions.

\section{Preliminary}
Let $[k] = \{1, 2, \ldots, k\}$, and 
$\zeta(\beta) = \sum_{n = 1}^{\infty}\frac{1}{n^\beta}$ denotes the Riemann Zeta function, and $2 > \zeta(\beta) > 1$ if $\beta \geq 2$.

\vspace{-1em}

\paragraph{\textbf{Problem setup.}}
An instance of a two-sided matching market is specified by a set of $N$ agents, $\mathcal{N} = \{a_1, a_2, \ldots, a_N\}$ on one side, and a set of $K$ arms, $\mathcal{K} = \{ b_1, b_2, \ldots, b_K\}$, on the other side. 
%
The preference of an agent $a_i$, denoted by $\succ_{a_i}$, is a strict total ordering over the arms. 
Each agent $a_i$ is additionally endowed with a utility $\mu_{i,j}$ over arm $b_j$.
Thus, we say an agent $a_i$ prefers arm $b_j$ to $b_k$, i.e. $b_j \succ_{a_i} b_k$,  if and only if $\mu_{i,j} > \mu_{i,k}$.\footnote{We assume preferences do not contain ties for simplicity as in previous works~\citep{pmlr-v108-liu20c,doi:10.1137/1.9781611977554.ch55}. When preferences are ordinal and contain ties, stable solutions may not exist in their strong sense (see, e.g. \citet{irving1994stable,manlove2002structure}).} 
We use $\mu$ to indicate the utility profile of all agents, where $\mu = (\mu_{i,j})_{i\in [N], j\in [K]}$.
The preferences of arms are denoted by a strict total ordering over the agents, i.e. an arm $b_i$ has preference $\succ_{b_i}$. 
The \textit{minimum preference gap} is defined as $\Delta = \min_{i}\min_{j \ne k}|\mu_{i,j} - \mu_{i,k}|$.
It captures the difficulty of a learning problem in matching markets, i.e., the mechanism needs more samples to estimate the preference profile if $\Delta$ is small.

\vspace{-1em}
\paragraph{\textbf{Stable matching.}} A matching is a mapping $m: \mathcal{N} \cup \mathcal{K} \to \mathcal{N} \cup \mathcal{K} \cup \{\emptyset \}$ such that $m(a_i) \in \mathcal{K}$ for all $i \in [N]$, and $m(b_j) \in \mathcal{N} \cup \{ \emptyset \}$ for all $j \in [K]$, $m(a_i) = b_j$ if and only if $m(b_j) = a_i$. Additionally, $m(b_j) = \emptyset$ if $b_j$ is not matched.
Sometimes we abuse the notation and use $a_i$ to denote $i$ if agent is clear from context, and similarly use $b_j$ to denote $j$ if arm is clear from context. 
Given a matching $m$, an agent-arm pair $(a_i,b_j)$ is called a blocking pair if they prefer each other than their assigned partners, i.e., $b_j \succ_{a_i} m(a_i)$ and $a_i \succ_{b_j} m(b_j)$.
Note that for any arm $b_j$, getting matched is always better than being not matched, i.e., $a_i \succ_{b_j} \emptyset$. A matching is stable if there is no blocking pair.

The \emph{Deferred Acceptance (DA) algorithm}~\citep{gale1962college} finds a stable matching in two sided market as follows: the participants from the proposing side make proposals to the other side according to their preferences. The other side tentatively accepts the most favorable proposals and rejects the rest. The process continues until everyone from the proposing side either holds an accepted proposal (i.e., matched to the one who has accepted its proposal), or has already proposed to everyone on its preference list (i.e., remains unmatched).
We consider two variants of the DA algorithm, namely, \textit{agent-proposing} and \textit{arm-proposing}. The matching computed by the DA algorithm is optimal for the proposing side~\citep{gale1962college}, i.e. proposing side receives their best match among all stable matchings. It is simultaneously pessimal for the receiving side~\citep{mcvitie1971stable}.
We denote the agent-optimal (arm-pessimal) stable matching by $\overline{m}$ and the agent-pessimal (arm-optimal) stable matching by $\underline{m}$. 

\vspace{-1em}

\paragraph{\textbf{Rewards and preferences.}}
Agents receive stochastic rewards by pulling arms.
If an agent $a_i$ pulls an arm $b_j$, she gets a stochastic reward drawn from a 1-subgaussian distribution\footnote{A random variable $X$ is $d$-subgaussian if its tail probability satisfies $P(|X|>t) \le 2 \exp(-\frac{t^2}{2d^2})$ for all $t \ge 0$.} with mean value $\mu_{i,j}$. 
We denote the sample average of agent $a_i$ over arm $b_j$ as $\hat{\mu}_{i,j}$.
The \emph{agent-optimal stable regret} is defined as the difference between the expected reward from the agent's most preferred stable match and the expected reward from the arm that the agent is matched to.
Formally, we have $\overline{R}_i(m) = \mu_{i,\overline{m}(a_i)} - \mu_{i,m(a_i)}$ for agent $a_i$ and matching $m$. 
Similarly, the \emph{arm-optimal stable regret} as $\underline{R}_i(m) = \mu_{i,\underline{m}(a_i)} - \mu_{i,m(a_i)}$.

\vspace{-1em}

\paragraph{\textbf{Preference profiles.}}
Restriction on preferences has been heavily studied by previous papers (see \cite{pmlr-v130-sankararaman21a,pmlr-v139-basu21a,NEURIPS2022_615ce9f0}) as they capture natural structures where, for example, riders all rank drivers according to a common \textit{masterlist}, but drivers may have different preferences according to, e.g., distance to riders.
If the true preference profiles are known and there exists a unique stable matching, then the agent-proposing DA algorithm and the arm-proposing DA algorithm lead to the same matching, namely $\overline{m} = \underline{m}$. 
A natural property of the preference profile that leads to unique stable matching is called uniqueness consistency where not only there exists a unique stable matching, but also any subset of the preference profile that contains the stable partner of each agent/arm in the subset, there exists a unique stable matching.
\citet{karpov2019necessary} provided a necessary and sufficient condition ($\alpha$-condition) to characterize preference profiles that satisfy uniqueness consistency.
A preference profile satisfies the $\alpha$-condition if and only if there is a stable matching $m$, an order of agents and arms such that $        \forall i\!\in\![N], \forall j > i, m(a_i) \succ_{a_i} b_j$, and a possibly different order of agents and arms such that $
        \forall j\!\in\![K], \forall i > j, m(b_j) \succ_{b_j} a_i.$

\section{Unique Stable Matching: Agents vs. Arms}
\label{sec:unique}

To warm up, we start by analyzing instances of a matching market where a unique stable solution exists.
As we discussed in the preliminaries, these markets are common and can be characterized by a property called uniqueness consistency. 
We show that for \textit{any} sampling algorithm, the arm-proposing \DA{} algorithm is more likely to generate a stable matching compared to the agent-proposing DA algorithm.

\begin{restatable}{theorem}{thmuniqcon}\label{thm:uniqcon}
    Assume that the true preferences satisfy uniqueness consistency condition. For any estimated utility $\hat{\mu}$,  if the agent-proposing DA algorithm produces a stable matching, then the arm-proposing DA algorithm produces a stable matching.
\end{restatable}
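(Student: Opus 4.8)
The plan is to first reduce the statement to a claim about the \emph{estimated} market, and then to contradict uniqueness consistency by exhibiting a rotation in the \emph{true} market. Throughout, the two DA variants are run on the estimated profile, in which each agent $a_i$ ranks arms by $\hat\mu$ (write $\succ^{\hat\mu}_{a_i}$) while every arm uses its true ranking $\succ_{b_j}$; ``produces a stable matching'' means stable for the true profile. Let $m^*$ be the unique true-stable matching guaranteed by uniqueness consistency, and let $M_A,M_B$ denote the agent- and arm-proposing DA outputs on the estimated profile. By proposing-side optimality $M_A$ is agent-optimal and $M_B$ is arm-optimal \emph{in the estimated market}, and both are estimated-stable. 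Since $M_A$ is assumed true-stable and $m^*$ is the only true-stable matching, $M_A=m^*$. Hence it suffices to prove $M_B=m^*$; because $M_B$ is estimated-agent-pessimal, this is the same as showing the estimated market has a \emph{unique} stable matching.

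Suppose instead $M_A\ne M_B$. First I would invoke the rural-hospitals theorem (both are estimated-stable) so that $M_A$ and $M_B$ match the same agents and arms, and the Decomposition Lemma, so that $M_A\cup M_B$ splits into alternating cycles. Fix one nontrivial cycle and index it as $m^*(a_k)=M_A(a_k)=\beta_k$ and $M_B(a_k)=\beta_{k-1}$ (indices mod the cycle length), so $M_B(\beta_k)=a_{k+1}$. Estimated-agent-optimality of $M_A$ gives $\beta_k\succ^{\hat\mu}_{a_k}\beta_{k-1}$, and estimated-arm-optimality of $M_B$ (arms use true rankings) gives $a_{k+1}\succ_{\beta_k}a_k$ for every $k$. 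The cycle's vertex set $C=\{a_k\}\cup\{\beta_k\}$ is closed under $m^*=M_A$.

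The key step is to upgrade the agents' estimated preferences along $C$ to \emph{true} preferences using that $m^*$ is true-stable. For each $k$ the arm $\beta_k$ truly prefers $a_{k+1}$ to its partner $m^*(\beta_k)=a_k$; since $(a_{k+1},\beta_k)$ cannot block the true-stable $m^*$, agent $a_{k+1}$ must truly prefer $m^*(a_{k+1})=\beta_{k+1}$ to $\beta_k$, i.e.\ $\beta_{k+1}\succ^{\mu}_{a_{k+1}}\beta_k$. Thus, \emph{in the true market restricted to $C$}, we have $a_{k+1}\succ_{\beta_k}a_k$ and $\beta_{k+1}\succ^{\mu}_{a_{k+1}}\beta_k$ for all $k$: exactly a rotation exposed at the stable matching $m^*|_C$. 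On the other hand, $C$ is closed under $m^*$, so uniqueness consistency forces the true market restricted to $C$ to have $m^*|_C$ as its \emph{unique} stable matching. A rotation exposed at a stable matching certifies that this matching is not arm-optimal---some stable matching gives the cycle's arms strictly better partners---so the restricted true market has at least two stable matchings, the desired contradiction. Therefore $M_A=M_B=m^*$, and in particular $M_B$ is true-stable.

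I expect the last step to be the main obstacle and the place needing the most care. It is tempting to take the shifted matching $M_B|_C$ as the promised ``second'' stable matching of the restricted true market, but for cycles of length $\ge 3$ the estimated/true disagreement on an agent's ranking of a \emph{third} cycle arm can make $M_B|_C$ itself true-unstable; indeed one can check that only the agents' ranking of consecutive cycle partners is pinned down by the argument above. The robust fix is to not exhibit a concrete second matching but to appeal to the structural characterization of uniqueness via rotations (the agent-optimal stable matching is the unique one iff it coincides with the arm-optimal one iff no rotation is exposed), which is exactly what the displayed cyclic inequalities provide. Localizing everything to the closed set $C$---so that the $\alpha$-condition/closed-subset form of uniqueness consistency applies verbatim---is what makes this clean; verifying that the cyclic inequalities indeed yield an exposed rotation (rather than merely a single ``dissatisfied'' pair, which need not break uniqueness) is the technical heart.
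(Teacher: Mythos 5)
Your reduction to $M_B = m^*$ is sound, and your route is genuinely different from the paper's. The paper never touches rotation machinery: it invokes Karpov's $\alpha$-condition (the characterization of uniqueness consistency), which supplies an ordering of arms with $m^*(b_i)=a_i$ and $a_i \succ_{b_i} a_j$ for all $j>i$; since $M_B$ is arm-optimal among estimated-stable matchings and $M_A=m^*$ is estimated-stable, every arm weakly (true-)prefers its $M_B$-partner to its $m^*$-partner, and taking the least index $l$ at which $M_B$ deviates forces $M_B(b_l)=a_k$ with $k>l$ and $a_k \succ_{b_l} a_l$, contradicting the ordering in about three lines. You instead use only the hereditary (closed-subset) form of uniqueness consistency and never the $\alpha$-condition: rural hospitals plus the decomposition lemma isolate a cycle $C$ closed under $m^*$, your blocking-pair argument neatly upgrades the true arm-side inequalities $a_{k+1}\succ_{\beta_k}a_k$ into true agent-side inequalities $\beta_{k+1}\succ^{\mu}_{a_{k+1}}\beta_k$, and uniqueness consistency localizes to $C$. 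This buys independence from Karpov's characterization (your argument applies verbatim to any hereditary uniqueness notion) at the price of heavier structure theory; the paper's argument is far shorter. You were also right to reject $M_B|_C$ as the second stable matching: it can indeed be true-unstable for cycles of length at least $3$.

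The one genuine gap is exactly the step you flagged: your cyclic inequalities are not literally an exposed rotation, because the definition requires $\beta_k$ to be the \emph{first} arm below $\beta_{k+1}$ on $a_{k+1}$'s true list that prefers $a_{k+1}$ to its own partner, i.e.\ $\beta_k = s_{m^*}(a_{k+1})$, whereas you only know $\beta_k$ is \emph{some} such arm. But the gap closes in a few lines. In the restricted market on $C$, every agent $a_{k+1}$ has a nonempty successor set, witnessed by $\beta_k$: it lies strictly below $m^*(a_{k+1})$ on $a_{k+1}$'s true list (your upgraded inequality) and truly prefers $a_{k+1}$ to its partner $a_k$. Hence $s_{m^*}(a)$ is defined for \emph{every} agent of the restricted market, so the map sending the pair $(a, m^*(a))$ to the pair containing $s_{m^*}(a)$ is a total function on the $r$ pairs with no fixed point (as $s_{m^*}(a)\neq m^*(a)$); iterating it from any pair must produce a cycle of length at least two, and such a cycle \emph{is} an exposed rotation in the standard Gusfield--Irving sense. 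Eliminating it yields a second stable matching of the restricted true market, contradicting the uniqueness granted by the closed-subset property. With that short lemma inserted, your proof is complete and correct.
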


\begin{proof}
    We denote the matching generated based on the estimated utility $\hat{\mu}$ by the agent-proposing DA and arm-proposing DA as $\hat{m}_1$ and $\hat{m}_2$, respectively. By the definition of uniqueness consistency there is only one stable matching, denote it by $m^{*}$. 
    Since $\hat{m}_1$ is assumed to be stable, $\hat{m}_1=m^*$, we show that the arm-proposing DA algorithm also returns the same matching, i.e., $\hat{m}_2 = m^{*}$.

    By using the Rural-Hospital theorem~\citep{roth1986allocation} on the estimated preferences $\hat{\mu}$, we have that the same set of arms are matched in both $\hat{m}_1$ and $\hat{m}_2$, so we can reduce the case to $K=N$ by only considering the set of matched arms. 

    Since the true preferences satisfy uniqueness consistency, then it must satisfy the $\alpha$-condition.
%
    Using the definition of $\alpha$-condition, we have an ordering of agents and arms such that 
    \begin{equation}
        a_i \succ_{b_i} a_j, \forall j > i,
        \label{eq:some}
    \end{equation}
    where $a_i = \hat{m}_1(b_i)$.
    
    Suppose for contradiction that $\hat{m}_2 \ne m^{*}$. Then there must exist some $k > l$, such that $\hat{m}_2(b_{l}) = a_{k}$. However, 
    since both $\hat{m}_1$ and $\hat{m}_2$ are stable with respect to $\hat{\mu}$ and $\hat{m}_2$ is arm optimal, the partner of arm $b_l$ in $\hat{m}_2$ is at least as good as its partner in $\hat{m}_1$.  Thus, 
    \begin{equation*}
        a_{k} = \hat{m}_2(b_l) \succ_{b_l} \hat{m}_1(b_l) = a_l,
    \end{equation*} 
    which is a contradiction to \Cref{eq:some}.
    Thus, we prove that $\hat{m}_2$ is also stable (with respect to true utility).
\end{proof}

\Cref{thm:uniqcon} states that when preferences satisfy the uniqueness consistency condition, for any estimated utility, the stability of agent-proposing \DA{} matching implies the stability of arm-proposing \DA{} matching.
For any fixed sampling algorithm, each estimation occurs with some probability, so we immediately have the following corollary.

\begin{corollary}\label{corr:alpha}
    For any sampling algorithm, the arm-proposing \DA{} algorithm has a higher probability of being stable than the agent-proposing \DA{} algorithm if the true preferences satisfy uniqueness consistency condition.
\end{corollary}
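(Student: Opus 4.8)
The plan is to lift the pointwise, deterministic implication of \Cref{thm:uniqcon} to a statement about probabilities via the monotonicity of any probability measure. First I would fix an arbitrary sampling algorithm and let it run; this induces a probability distribution over the realized estimated utility profile $\hat{\mu}$ on some underlying probability space $(\Omega, \mathcal{F}, \probP)$. The crucial observation is that both variants of \DA{} are executed on the \emph{same} sampled profile $\hat{\mu}$, so the two stability events can be defined on this one common space as functions of $\hat{\mu}$.

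Next I would introduce the two events
\begin{align*}
A &= \{\omega : \text{agent-proposing \DA{} on } \hat{\mu}(\omega) \text{ is stable w.r.t.\ the true preferences}\},\\
B &= \{\omega : \text{arm-proposing \DA{} on } \hat{\mu}(\omega) \text{ is stable w.r.t.\ the true preferences}\}.
\end{align*}
\Cref{thm:uniqcon} states precisely that for \emph{every} realization of $\hat{\mu}$, stability of the agent-proposing matching forces stability of the arm-proposing matching. Reading this realization-by-realization yields the set containment $A \subseteq B$: any outcome $\omega \in A$ satisfies the hypothesis of the theorem for its profile $\hat{\mu}(\omega)$ and hence lies in $B$. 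Monotonicity of $\probP$ then gives $\probP(A) \le \probP(B)$, i.e.\ the arm-proposing \DA{} is at least as likely to return a stable matching as the agent-proposing \DA{}, which is the claimed inequality. Since nothing in the argument used any specific feature of the sampling rule, it holds verbatim for every sampling algorithm and the corollary follows.

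I expect the only real subtlety---rather than a genuine obstacle---to be bookkeeping: one must ensure both stability events are measured against the \emph{same} sampled profile so that they genuinely live on a single probability space, making the event containment $A \subseteq B$ legitimate rather than a comparison across two unrelated experiments. One should also read the stated ``higher probability'' as ``at least as high'' ($\ge$), which is exactly what monotonicity delivers; strictness can fail, for instance when the sampling concentrates all its mass on profiles that recover the true (uniquely stable) preferences, in which case both probabilities equal one.
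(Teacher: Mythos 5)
Your proposal is correct and follows essentially the same route as the paper, which likewise derives the corollary ``immediately'' from \Cref{thm:uniqcon} by observing that for any fixed sampling algorithm each realization of $\hat{\mu}$ occurs with some probability, so the pointwise implication yields containment of the stability events and hence the probability inequality. Your explicit bookkeeping---placing both events on the common probability space induced by the sampled $\hat{\mu}$, and reading ``higher probability'' as the weak inequality $\ge$ (with your example showing strictness can fail)---is a sound and accurate articulation of what the paper leaves implicit.
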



The following example further shows that the arm-proposing DA could generate a stable matching even if the estimation is incorrect, while the agent-proposing \DA{} generates an unstable matching.
In \cref{sec:experiments}, we provide empirical evaluations on stability and regret of variants of the \DA{} algorithm.

\begin{example}[The stability of arm vs. agent proposing DA when estimation is wrong.]\label{ex:armDA}
Consider two agents $\{a_1,a_2\}$ and two arms $\{ b_1,b_2\}$. Assume the true preferences are as follows:
\begin{minipage}{0.45\linewidth}
\begin{align*}
a_1 &: b_1^* \succ \underline{b_2} \\
a_2 &: \underline{b_1} \succ b_2^*
\end{align*}
\end{minipage}
\hfill
\begin{minipage}{0.45\linewidth}
\begin{align*}
b_1 &: a_1^* \succ \underline{a_2} \\
b_2 &: a_2^* \succ \underline{a_1}
\end{align*}
\end{minipage}

The matching denoted by $^*$ is the only stable solution in this instance.
Assume that after sampling data, agent $a_1$ has a wrong estimation: $b_2 \succ b_1$ and agent $a_2$ has the correct estimation.
Under the wrong estimation, arm-proposing DA algorithm returns the matching $^*$ while agent-proposing DA algorithm returns the underlined matching, which is unstable with respect to true preferences. 
Note that algorithms that rely on agent-proposing DA (e.g. \citep{liu2021bandit, doi:10.1137/1.9781611977554.ch55}) may similarly fail to find a stable matching as they do not exploit the known arms preferences effectively.
\end{example}

\section{Uniform Sampling DA Algorithms} \label{sec:uniform}

\begin{algorithm}[h!]\small
    \SetKwInOut{Input}{Input}

    \Input{Parameter $\beta$, sample budget T.}
    \For{$t = 1, 2, \ldots, T$}{
    \For{$i = 1,2, \ldots, N$}{
    Agent $a_i$ pulls $m_i(t) = b_{j}$, where $j = (t+i-1) \mod (K + 1)$\;
    Agent $a_i$ observes a return $X_i(t)$ and updates $\hat{\mu}_{i,j}$ \;
    Compute $UCB_{i,j}$ and $LCB_{i,j}$ using $\beta$\;
    }
    \If{$\exists$ a permutation $\sigma_i$ for all $i \in [N]$ such that $LCB_{i, \sigma_i(k)} > UCB_{i, \sigma_i(k+1)}, \forall k \in [K-1]$}{
        Break\;
    }
}
return $\{\hat{\mu}_{i,j} \mid i \in [N], j \in [K]\}$.
    \caption{Uniform sampling algorithm}\label{alg:equalExplore}
\end{algorithm}

In this section, we compare the stability performance for two types of DA combined with uniform sampling algorithm when the preferences could be arbitrary.
Compared with \Cref{sec:unique}, we note that the theory in this section does not constraint the preferences.
We provide probability bounds for finding an unstable matching in \Cref{sec:stab-uniform} and analyze sample complexity for reaching a stable matching in \Cref{sec:uniform-sample}.

\emph{Uniform sampling} is a technique in bandit literature~\citep{garivier2016explore} (usually termed as exploration-then-commit algorithm,  ETC). 
\cite{doi:10.1137/1.9781611977554.ch55} utilized UCB to construct a confidence interval (CI) for each agent-arm pair, where each agent samples arms uniformly.
The exploration phase stops when every agent's CIs for each pair of arms have no overlap, i.e. agents are confident that arms are ordered by the estimation correctly.
Then, in the commit phase agents form a matching through agent-proposing DA, and keep pulling the same arms.  

Agents construct CIs based on the collected data by utilizing the upper confidence bound (UCB) and lower confidence bound (LCB). Given a parameter $\beta$, if arm $b_j$ is sampled $t$ times by agent $a_i$, we define the UCB and LCB as follows:
    \begin{equation}
    \label{def:ucb_lcb}
    UCB_{i,j} = \hat{\mu}_{i,j}(t) + \sqrt{2\beta \log(Kt)/t}, \;
    LCB_{i,j} = \hat{\mu}_{i,j}(t) - \sqrt{2 \beta \log(Kt)/t},
\end{equation}
where $\hat{\mu}_{i,j}(t)$ is the average of the collected samples.

\noindent
\textbf{Uniform sampling algorithm (\Cref{alg:equalExplore})} Agents explore the arms uniformly.
Suppose that agent $a_i$ has disjoint confidence intervals over all arms, i.e., there exists a permutation $\sigma_i := (\sigma_i(1), \sigma_i(2), \ldots, \sigma_i(K))$ over arms such that $LCB_{i, \sigma_i(k)} > UCB_{i, \sigma_i(k+1)}$ for each $k \in [K-1]$. 
Then, agent $a_i$ can reasonably infer the accuracy of the estimated preference profile.
The parameter $\beta$ is used to control the confidence length, where a larger $\beta$ implies that the agent needs more samples to differentiate the utility for a pair of arms.

After the sampling stage, agents can consider to form a matching either through agent-proposing DA or arm-proposing DA, as is discussed in \Cref{sec:unique}.
For simplicity, we refer uniform sampling (\Cref{alg:equalExplore}) with agent-proposing DA as uniform agent-DA algorithm, and uniform sampling with arm-proposing DA as uniform arm-DA algorithm.
%


\subsection{Probability Bounds for Stability}
\label{sec:stab-uniform}
We provide theoretical analysis on probability bounds for stability for the uniform agent-DA algorithm and the uniform arm-DA algorithm.
We show probability bounds of learning a stable matching using the properties of stable solutions and the structure of the profile.
We first define the following notions of local and global \textit{envy-sets}.

\begin{definition}
\label{def:envy-set}
    The \emph{local envy-set for agent $a_i$ for a matching $m$} is defined as 
\begin{equation*}
 ES_i(m) = 
\begin{cases} 
\emptyset & \text{if} \ \{ b_j: a_i \succ_{b_j} m(b_j)\} \  \text{is empty} \\
\{ b_j: a_i \succ_{b_j} m(b_j)\} \bigcup \{ m(a_i)\} & \text{otherwise}.
\end{cases}   
\end{equation*}
 The \emph{global envy-set of a matching $m$} is defined as the union of local envy-sets over all agents: 
\begin{equation*}
    ES(m) = \bigcup_{i \in [N]}\{ (a_i, b_j) \colon b_j \in  ES_i(m)) \}.
\end{equation*}
\end{definition}
By the definition of stability, a matching $m$ is stable if and only if the global envy-set is justified, i.e., agents truly prefer their current matched arm to the arms in the 
envy-set $ES(m)$.
Formally, $\mu_{i,m(a_i)} \geq \mu_{i,j}$, for all $(a_i, b_j) \in ES(m)$ if and only if $m$ is a stable matching.
This observation is key in establishing theoretical results.

The following lemma provides a condition for finding a stable matching using the envy-set of the estimated matching.


\begin{restatable}{lemma}{lemstable}\label{lem:stable}
    Assume $\hat{\mu}$ is the sample average, and matching $\hat{m}$ is stable with respect to $\hat{\mu}$. 
    Define a `good' event for agent $a_i$ and arm $b_j$ as
    $
        \mathcal{F}_{i,j} = \{|\mu_{i,j} - \hat{\mu}_{i,j}| \le \Delta/2 \}
        \label{eq:fij}
    $, and define the intersection of the good events over envy-set as 
        $
        \mathcal{F}(ES(\hat{m})) = \cap_{(a_i,b_j) \in ES(\hat{m})} \mathcal{F}_{i,j}
        $. 
        Then if the event $ \mathcal{F}(ES(\hat{m}))$ occurs, matching $\hat{m}$ is guaranteed to be stable (with respect to $\mu$). 
\end{restatable}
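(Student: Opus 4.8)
The plan is to leverage the characterization stated just before the lemma: a matching is stable with respect to a utility profile if and only if every agent weakly prefers its matched arm to every arm in its local envy-set. Concretely, to prove that $\hat{m}$ is stable with respect to the true utility $\mu$, it suffices to show $\mu_{i,\hat{m}(a_i)} \ge \mu_{i,j}$ for every pair $(a_i,b_j) \in ES(\hat{m})$.

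First I would fix an arbitrary pair $(a_i, b_j) \in ES(\hat{m})$. If $b_j = \hat{m}(a_i)$ the required inequality is immediate, so I assume $b_j \ne \hat{m}(a_i)$. The starting point is that $\hat{m}$ is stable with respect to $\hat{\mu}$; since $b_j$ lies in the envy-set, arm $b_j$ prefers $a_i$ to its current partner $\hat{m}(b_j)$, so stability under $\hat{\mu}$ forces $a_i$ not to prefer $b_j$, yielding $\hat{\mu}_{i,\hat{m}(a_i)} > \hat{\mu}_{i,j}$ (strict, since $b_j \ne \hat{m}(a_i)$ and the estimated utilities of distinct arms are distinct). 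The observation I would exploit next is that the definition of the local envy-set $ES_i(\hat{m})$ includes the matched arm $\hat{m}(a_i)$ itself whenever the set is nonempty; hence, under the event $\mathcal{F}(ES(\hat{m}))$, both good events $\mathcal{F}_{i,j}$ and $\mathcal{F}_{i,\hat{m}(a_i)}$ hold, giving $|\mu_{i,j} - \hat{\mu}_{i,j}| \le \Delta/2$ and $|\mu_{i,\hat{m}(a_i)} - \hat{\mu}_{i,\hat{m}(a_i)}| \le \Delta/2$.

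The core step is a contradiction argument that converts the estimated strict inequality into the true one using the minimum gap $\Delta$. Suppose instead that $\mu_{i,\hat{m}(a_i)} < \mu_{i,j}$; since distinct arms differ by at least $\Delta$ in true utility, $\mu_{i,j} \ge \mu_{i,\hat{m}(a_i)} + \Delta$. Chaining this with the two good-event bounds gives $\hat{\mu}_{i,j} \ge \mu_{i,j} - \Delta/2 \ge \mu_{i,\hat{m}(a_i)} + \Delta/2 \ge \hat{\mu}_{i,\hat{m}(a_i)}$, contradicting $\hat{\mu}_{i,\hat{m}(a_i)} > \hat{\mu}_{i,j}$. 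Therefore $\mu_{i,\hat{m}(a_i)} > \mu_{i,j}$. As the pair was arbitrary, the envy-set of $\hat{m}$ is justified under $\mu$, and the characterization yields stability of $\hat{m}$ with respect to $\mu$.

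I expect the main subtlety, rather than any calculational obstacle, to be recognizing that $\hat{m}(a_i)$ belongs to its own local envy-set, so that $\mathcal{F}_{i,\hat{m}(a_i)}$ is guaranteed by $\mathcal{F}(ES(\hat{m}))$; without this one cannot control the true utility of the matched arm. A secondary point worth care is that the naive approach of directly propagating $\hat{\mu}_{i,\hat{m}(a_i)} > \hat{\mu}_{i,j}$ to the true utilities fails, because the estimated gap may be arbitrarily small; it is the $\Delta$-gap on the \emph{true} preferences, invoked through the contradiction, that rescues the argument.
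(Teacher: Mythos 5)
Your proof is correct and takes essentially the same route as the paper's: the identical contradiction argument that converts the estimated-stability inequality $\hat{\mu}_{i,\hat{m}(a_i)} > \hat{\mu}_{i,j}$ into a contradiction via the chain $\hat{\mu}_{i,j} \ge \mu_{i,j} - \Delta/2 \ge \mu_{i,\hat{m}(a_i)} + \Delta/2 \ge \hat{\mu}_{i,\hat{m}(a_i)}$, using the $\Delta$-gap on the true utilities. Your explicit observation that $\hat{m}(a_i)$ belongs to its own local envy-set, so that $\mathcal{F}_{i,\hat{m}(a_i)}$ is indeed guaranteed by $\mathcal{F}(ES(\hat{m}))$, makes precise a step the paper's proof uses only implicitly.
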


\begin{proof}
    We show that no agent-arm pair forms a blocking pair in $\hat{m}$. We prove it by contradiction.
    
    Assume that there exists an agent $a_i$ and an arm $b_j$ in the local envy-set $ES_i(\hat{m})$ that blocks $\hat{m}$, which means 
    $\mu_{i,\hat{m}(a_i)} < \mu_{i, j}$, and more concretely, $\mu_{i,j} - \mu_{i,\hat{m}(a_i)} \ge \Delta$.
    From stability of $\hat{m}$ with respect to the preference $\hat{\mu}$ we have that 
        $\hat{\mu}_{i,\hat{m}(a_i)} > \hat{\mu}_{i,j}$, 
        otherwise, 
        $(a_i, b_j)$ blocks $\hat{m}$ according to $\hat{\mu}$. 
     Thus, under the event $\mathcal{F}(ES(\hat{m}))$, 
     it holds that 
     \begin{equation*}
         \hat{\mu}_{i,j} \ge \mu_{i,j} - \frac{\Delta}{2} \ge \mu_{i, \hat{m}(a_i)} + \frac{\Delta}{2} \ge \hat{\mu}_{i, \hat{m}(a_i)} ,
     \end{equation*}
     which is a contradiction. Therefore, $\hat{m}$ is stable with respect to $\mu$.
\end{proof}

Now we prove the following bounds for two types of uniform sampling algorithms.
The probability bound depends on the size of the envy-set.
\begin{restatable}{theorem}{thmgeneralcase}
\label{thm:general_case}
    By uniform sampling (\Cref{alg:equalExplore}), each agent samples each arm $T/K$ times, and $\hat{m}_1$ and $\hat{m}_2$ are matchings generated by agent-proposing DA and arm-proposing DA, respectively. 
    Then,
    \begin{enumerate}[(i)]
        \item $P(\hat{m}_1 \ \text{is unstable}) = O(|ES(\overline{m} )|\exp(- \frac{\Delta^2T}{8K})) $,
        \item $P(\hat{m}_2 \ \text{is unstable}) = O(|ES(\underline{m} )| \exp(-\frac{\Delta^2T}{8K}))$, 
    \end{enumerate} 
    where $\underline{m}$ is the agent-pessimal stable matching and $\overline{m}$ is the agent-optimal stable matching.
\end{restatable}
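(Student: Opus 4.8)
My plan is to reduce both statements to a sub-Gaussian tail estimate through \Cref{lem:stable}, and then obtain the combinatorial factor by a union bound over a \emph{fixed} envy-set. The two claims are mirror images of one another, so I would present (i) in full and deduce (ii) by exchanging the two proposing sides together with $\overline{m}\leftrightarrow\underline{m}$.

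First I would record the per-pair deviation bound. Under \Cref{alg:equalExplore} each $\hat\mu_{i,j}$ is the mean of $T/K$ independent $1$-subgaussian samples, so $\hat\mu_{i,j}-\mu_{i,j}$ is $\sqrt{K/T}$-subgaussian and
\[
P(\mathcal F_{i,j}^{\,c}) = P\!\left(|\hat\mu_{i,j}-\mu_{i,j}|>\tfrac{\Delta}{2}\right)\le 2\exp\!\left(-\frac{\Delta^2 T}{8K}\right).
\]
This is exactly the exponential factor appearing in the theorem, so the remaining task is purely combinatorial: to show that only the $|ES(\overline{m})|$ pairs in a single fixed envy-set can be responsible for instability.

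Since the agent-proposing DA executed on $\hat\mu$ always returns a matching $\hat m_1$ that is stable with respect to $\hat\mu$, \Cref{lem:stable} reduces the problem to bounding $P(\mathcal F(ES(\hat m_1))^{c})$. The obstacle is that $ES(\hat m_1)$ is random while the target is stated in terms of the fixed set $ES(\overline m)$. My intended route is to prove that on the event $\mathcal E:=\bigcap_{(a_i,b_j)\in ES(\overline m)}\mathcal F_{i,j}$ the output $\hat m_1$ is already stable with respect to $\mu$; the union bound then gives $P(\hat m_1\text{ unstable})\le P(\mathcal E^{c})\le 2|ES(\overline m)|\exp(-\Delta^2 T/(8K))$. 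One direction is clean: if a pair $(a_i,b_j)$ blocked $\overline m$ under $\hat\mu$, then $a_i\succ_{b_j}\overline m(b_j)$ places both $(a_i,b_j)$ and $(a_i,\overline m(a_i))$ into $ES(\overline m)$, so on $\mathcal E$ the true stability of $\overline m$ together with the $\Delta$-gap yields $\hat\mu_{i,\overline m(a_i)}\ge\mu_{i,\overline m(a_i)}-\tfrac{\Delta}{2}\ge\mu_{i,j}+\tfrac{\Delta}{2}\ge\hat\mu_{i,j}$, a contradiction; hence $\overline m$ is $\hat\mu$-stable on $\mathcal E$.

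The hard part, which I expect to be the crux, is upgrading ``$\overline m$ is $\hat\mu$-stable'' to ``$\hat m_1$ is $\mu$-stable.'' Because $\hat m_1$ is agent-optimal for $\hat\mu$ while $\overline m$ is only some $\hat\mu$-stable matching, we obtain $\hat m_1(a_i)\,\hat\succeq_{a_i}\,\overline m(a_i)$ for every $i$ but not $\hat m_1=\overline m$: an estimation error on a pair outside $ES(\overline m)$ can trigger a rotation that strictly improves a closed set of agents under $\hat\mu$. I would therefore argue directly that any blocking pair $(a_i,b_j)$ of $\hat m_1$ with respect to $\mu$ must in fact lie in $ES(\overline m)$. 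The natural handle is arm-pessimality of $\overline m$: from the blocking relation $a_i\succ_{b_j}\hat m_1(b_j)$ one wants $a_i\succ_{b_j}\overline m(b_j)$, i.e.\ $\hat m_1(b_j)\succeq_{b_j}\overline m(b_j)$; since $\overline m(b_j)$ is $b_j$'s worst $\mu$-stable partner this is the favourable direction, but it is not automatic because $\hat m_1$ is only $\hat\mu$-stable, so ruling out that $\hat m_1$ drops some arm below its true-stable floor on $\mathcal E$ is where the real work lies. I would resolve it by tracking the arm-preference inequalities along the agents who change partners between $\overline m$ and $\hat m_1$, showing that each such change forces a violated $\mathcal F_{i,j}$ with $(a_i,b_j)\in ES(\overline m)$. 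Part (ii) then follows by the symmetric argument for the arm-proposing DA output $\hat m_2$, with $\underline m$ (arm-optimal) replacing $\overline m$ throughout.
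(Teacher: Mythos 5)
There is a genuine gap, and it is exactly at the step you yourself flagged as the crux: the deterministic claim that on $\mathcal E=\bigcap_{(a_i,b_j)\in ES(\overline m)}\mathcal F_{i,j}$ the output $\hat m_1$ is $\mu$-stable is \emph{false}, so the union bound over the fixed set $ES(\overline m)$ cannot close. Counterexample: take $N=K=3$, all agents truly ranking $b_1\succ b_2\succ b_3$, and arm preferences $b_1\colon a_1\succ a_2\succ a_3$, $b_2\colon a_2\succ a_1\succ a_3$, $b_3\colon a_3\succ a_1\succ a_2$. The unique stable matching is $\overline m=\{(a_1,b_1),(a_2,b_2),(a_3,b_3)\}$, and every arm strictly prefers its partner to every other agent, so $ES(\overline m)=\emptyset$ and your event $\mathcal E$ is the whole sample space. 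Yet with positive probability agent $a_1$'s estimates order the arms as $b_3,b_2,b_1$ while $a_2,a_3$ estimate correctly; agent-proposing DA on $\hat\mu$ then returns $\hat m_1=\{(a_1,b_3),(a_2,b_1),(a_3,b_2)\}$, which is blocked under $\mu$ by $(a_1,b_1)$. This blocking pair satisfies $\overline m(b_1)=a_1\succeq_{b_1}a_1\succ_{b_1}\hat m_1(b_1)=a_2$, i.e.\ precisely the configuration $\overline m(b_j)\succeq_{b_j}a_i\succ_{b_j}\hat m_1(b_j)$ you hoped to rule out, and the estimation errors responsible live entirely on pairs outside $ES(\overline m)$, so no ``violated $\mathcal F_{i,j}$ with $(a_i,b_j)\in ES(\overline m)$'' exists to be tracked. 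The same example shows your target inequality $P(\hat m_1\text{ unstable})\le 2|ES(\overline m)|\exp(-\Delta^2T/(8K))$ is literally false when $|ES(\overline m)|=0$: the left side is positive.

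The paper avoids this by never fixing the envy-set in advance. It applies \Cref{lem:stable} to the \emph{random} set $ES(\hat m_1)$ of the realized matching (which is legitimate since $\hat m_1$ is always $\hat\mu$-stable), takes the union bound inside an expectation to obtain
\begin{equation*}
P(\hat m_1\text{ is unstable})\le 2\,\mathbb{E}\bigl[|ES(\hat m_1)|\bigr]\exp\Bigl(-\tfrac{\Delta^2T}{8K}\Bigr),
\end{equation*}
and then, in a separate step, controls $\mathbb{E}[|ES(\hat m_1)|]\le |ES(\overline m)|+N^2K^3\exp(-\Delta^2T/(4K))$ via $P(\hat m_1\neq\overline m)\le P(\exists\, i,j,j'\text{ misordered})$ and \Cref{lem:comp}. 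Note that this route necessarily carries an additive $\mathrm{poly}(N,K)\cdot e^{-3\Delta^2T/(8K)}$ correction, absorbed into the $O(\cdot)$ for large $T$ --- the counterexample above shows such a correction is unavoidable, which is a structural reason your cleaner fixed-set bound could never have worked. Your per-pair subgaussian tail, the use of \Cref{lem:stable}, the verification that $\overline m$ remains $\hat\mu$-stable on $\mathcal E$, and the symmetric treatment of part (ii) are all correct; the repair is to replace the fixed-set reduction with the random-set argument plus the expected-size comparison.
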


\begin{proof}
Since $\hat{m}_1$ and $\hat{m}_2$ are produced by DA based on $\hat{\mu}$, both matchings are stable with respect to $\hat{\mu}$.
By \cref{lem:stable}, both matchings are guaranteed to be stable with respect to $\mu$ conditioned on $\mathcal{F}(ES(\hat{m}_1))$ (or $\mathcal{F}(ES(\hat{m}_2)))$. Thus, it follows 
    \begin{align*}
        P(\hat{m}_1 
        \ \text{is unstable}) 
        &\le
        1 - P(\mathcal{F}(ES(\hat{m}_1))) \\
        &= \mathbb{E}[\cup_{(a_i, b_j) \in ES(\hat{m}_1)}P(|\mu_{i,j} - \hat{\mu}_{i,j}| \ge \Delta/2)] \\
        &\le       \mathbb{E}[\sum_{(a_i, b_j) \in ES(\hat{m}_1)}P(|\mu_{i,j} - \hat{\mu}_{i,j}| \ge \Delta/2)]\\ 
&\le2\mathbb{E}[|ES(\hat{m}_1)|] \exp(-\frac{\Delta^2 T}{8K}),
    \end{align*}
    where the second line follows from the definition of the event $\mathcal{F}$, the third line is by union bound, and the last line follows from \Cref{lem:subgaussian} that $\mu_{i,j} - \hat{\mu}_{i,j}$ is $\sqrt{\frac{K}{T}}$-subgaussian with 0 mean.

To complete the proof for $\hat{m}_1$, we demonstrate an upper bound of $\mathbb{E}[|ES(\hat{m}_1)|]$.
Then we have that
\begin{align*}    
|\mathbb{E}[|ES(\hat{m}_1)|] - |ES(\overline{m})|| &\le NK \cdot P(\hat{m}_1 \ne \overline{m}) \nonumber\\
&\leq NK \cdot P(\exists i,j,j' \ \textit{such that} \ \mu_{i,j} > \mu_{i,j'}, \hat{\mu}_{i,j} < \hat{\mu}_{i,j'}) \\
    &\le N^2K^3 P(\mu_{i,j} > \mu_{i,j'}, \hat{\mu}_{i,j} < \hat{\mu}_{i,j'}) \\
    &\le N^2K^3 exp(-\frac{\Delta^2 T}{4K}),
\end{align*}
where the second line comes from the fact that if $\hat{m}_1 \neq \overline{m}$, then there exists an agent $a_i$ and two arms $b_j$ and $b_{j'}$ that are learned incorrectly since a correct  estimated profile produces $\overline{m}$ by agent-proposing DA. The last inequality follows from \Cref{lem:comp} since $(\mu_{i,j}-\mu_{i,j'})\geq \Delta$ and number of samples is $T/K$.

Note that the difference is negligible when $T$ is sufficiently large compared with $N$ and $K$. The same computation applies to $\hat{m}_2$ and $\underline{m}$. Thus the second statement follows.
\end{proof}

In the next lemma, we show the relation between the size of the envy-sets for the agent-optimal and agent-pessimal matchings. Then, combining \Cref{thm:general_case} and \Cref{lem:envy-set-compare},  we prove the next corollary.
\begin{lemma}
    Given any instance of a matching problem, we have the following relationship between the size of the two envy sets:
    $        |ES(\underline{m} )| \le |ES(\overline{m} )|.
   $
    \label{lem:envy-set-compare}
\end{lemma}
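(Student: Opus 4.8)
The plan is to reduce the global comparison to a term-by-term comparison of the local envy-sets, exploiting the lattice structure of stable matchings. The crucial fact I would use is that $\underline{m}$ is arm-optimal while $\overline{m}$ is arm-pessimal, so every arm weakly prefers its partner under $\underline{m}$ to its partner under $\overline{m}$; that is, $\underline{m}(b_j) \succeq_{b_j} \overline{m}(b_j)$ for every $b_j$ (treating $\emptyset$ as least preferred, and noting that by the Rural-Hospital theorem the set of matched arms is identical in both matchings, so an arm is either matched in both or unmatched in both). By transitivity of $\succ_{b_j}$, if $a_i \succ_{b_j} \underline{m}(b_j)$ then $a_i \succ_{b_j} \underline{m}(b_j) \succeq_{b_j} \overline{m}(b_j)$, hence $a_i \succ_{b_j} \overline{m}(b_j)$. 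Writing $A_i = \{b_j : a_i \succ_{b_j} \underline{m}(b_j)\}$ and $B_i = \{b_j : a_i \succ_{b_j} \overline{m}(b_j)\}$, this yields the inclusion $A_i \subseteq B_i$ for every agent $a_i$.

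Next I would pass from these ``raw'' sets to the local envy-sets $ES_i$. If $A_i = \emptyset$ then $ES_i(\underline{m}) = \emptyset$ and the desired inequality $|ES_i(\underline{m})| \le |ES_i(\overline{m})|$ is immediate; otherwise $B_i \supseteq A_i$ is also nonempty, so both local envy-sets are formed by adjoining the respective matched arm, namely $ES_i(\underline{m}) = A_i \cup \{\underline{m}(a_i)\}$ and $ES_i(\overline{m}) = B_i \cup \{\overline{m}(a_i)\}$. The subtle point---and the step I expect to be the main obstacle---is that the two adjoined arms $\underline{m}(a_i)$ and $\overline{m}(a_i)$ need not coincide, so the inclusion $A_i \subseteq B_i$ does not directly lift to the local envy-sets. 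The observation that resolves this is that a matched arm never envies its own partner: for $b_j = \underline{m}(a_i)$ we have $\underline{m}(b_j) = a_i$, so $a_i \succ_{b_j} a_i$ fails and $\underline{m}(a_i) \notin A_i$; symmetrically $\overline{m}(a_i) \notin B_i$. Hence adjoining the matched arm increases each cardinality by exactly one, giving $|ES_i(\underline{m})| = |A_i| + 1 \le |B_i| + 1 = |ES_i(\overline{m})|$.

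Finally I would globalize. Since in $ES(m)$ each agent $a_i$ contributes only pairs whose first coordinate is $a_i$, the contributions of distinct agents are disjoint and therefore $|ES(m)| = \sum_{i \in [N]} |ES_i(m)|$. Summing the per-agent inequality over all $i$ then gives $|ES(\underline{m})| = \sum_i |ES_i(\underline{m})| \le \sum_i |ES_i(\overline{m})| = |ES(\overline{m})|$, which is exactly the claim. I do not anticipate any further difficulty beyond correctly handling the adjoined matched arm described above; the remaining steps are routine set-cardinality bookkeeping.
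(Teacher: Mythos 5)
Your proof is correct and takes essentially the same route as the paper's: both rest on the observation that $\underline{m}$ is arm-optimal and $\overline{m}$ is arm-pessimal, so every arm weakly prefers its partner under $\underline{m}$, which makes each set $\{b_j : a_i \succ_{b_j} \underline{m}(b_j)\}$ a subset of the corresponding set for $\overline{m}$. The paper concludes directly ``from the definition of the envy-set,'' while you additionally spell out the bookkeeping it leaves implicit---notably that the adjoined matched arm never belongs to the raw envy set (since $\succ_{b_j}$ is irreflexive), so each local cardinality comparison, and hence the agent-wise sum, goes through---making yours a more careful writeup of the same argument.
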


\begin{proof}
    Agent-pessimal stable matching $\underline{m}$ is the arm-optimal stable matching, and agent-optimal stable matching $\overline{m}$ is the arm-pessimal stable matching, so we have that $\underline{m}(b_j) \succ_{b_j}  \overline{m}(b_j)$ or $\overline{m}(b_j) = \underline{m}(b_j)$ for each arm $b_j$.
    From the definition of the envy-set, we have $|ES(\underline{m} )| \le |ES(\overline{m} )|$.
\end{proof}

\begin{corollary}\label{corr:general}
    The uniform arm-DA algorithm has a smaller probability bound of being unstable than the uniform agent-DA algorithm.
\end{corollary}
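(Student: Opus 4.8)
The plan is to derive this corollary by directly combining the two preceding results, since no new technical machinery is required—the substantive work has already been carried out in proving \Cref{thm:general_case} and \Cref{lem:envy-set-compare}. First I would recall from \Cref{thm:general_case} that the uniform agent-DA algorithm (which commits to the agent-proposing matching $\hat{m}_1$) has instability probability bounded by $O(|ES(\overline{m})| \exp(-\frac{\Delta^2 T}{8K}))$, while the uniform arm-DA algorithm (which commits to the arm-proposing matching $\hat{m}_2$) has instability probability bounded by $O(|ES(\underline{m})| \exp(-\frac{\Delta^2 T}{8K}))$. The key structural observation is that both bounds carry the identical exponential factor $\exp(-\frac{\Delta^2 T}{8K})$ and differ only in the envy-set cardinality appearing as the leading coefficient, so the comparison reduces entirely to comparing $|ES(\underline{m})|$ against $|ES(\overline{m})|$.

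Next I would invoke \Cref{lem:envy-set-compare}, which gives $|ES(\underline{m})| \le |ES(\overline{m})|$. The intuition, already established there, is that the agent-pessimal matching $\underline{m}$ is simultaneously arm-optimal, so each arm is matched to a partner at least as preferred as under the arm-pessimal matching $\overline{m}$; a more preferred partner for each arm yields a (weakly) smaller set of agents that the arm would prefer over its match, hence a smaller global envy-set. Substituting this inequality, the coefficient governing the arm-DA bound is no larger than the one governing the agent-DA bound, and therefore the uniform arm-DA instability bound does not exceed the uniform agent-DA instability bound.

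I do not anticipate any genuine obstacle, as this is a one-line consequence of the two cited results. The only point worth verifying explicitly is that the two bounds are comparable term-by-term: both algorithms execute the same uniform sampling procedure (\Cref{alg:equalExplore}) with identical $\Delta$, $T$, and $K$, differing solely in whether agent-proposing or arm-proposing DA is used in the commit phase. Consequently the shared exponential factor is truly common to both expressions and cancels out of the comparison, leaving the envy-set inequality as the sole determinant of the result.
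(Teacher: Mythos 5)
Your proposal is correct and matches the paper's own argument exactly: the paper derives \Cref{corr:general} precisely by combining the two bounds of \Cref{thm:general_case} (which share the common factor $\exp(-\frac{\Delta^2 T}{8K})$) with the inequality $|ES(\underline{m})| \le |ES(\overline{m})|$ from \Cref{lem:envy-set-compare}. Your additional check that both algorithms run the identical sampling procedure, so the exponential factors are genuinely comparable, is a sensible (if implicit in the paper) verification.
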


\begin{remark}
    \citet{pmlr-v108-liu20c} showed the probability bound of $\exp(-\frac{\Delta^2 T}{2K})$ for finding an invalid ranking by the ETC algorithm, where a valid ranking is defined as the estimated ranking such that the estimated pairwise comparison is correct for a subset of agent-arm pairs.
    However, their result did not relate the probability bound with the structure of the instance, whereas the bound in \Cref{thm:general_case} crucially uses the envy set to improve the probability of finding a stable solution.
    \cite{liu2021bandit} provided an upper bound on the sum of the probabilities of being unstable for the Conflict-Avoiding UCB algorithm (CA-UCB). Under CA-UCB algorithm, $O(\log^2(T))$ out of $T$ matchings are unstable in expectation.  
\end{remark}


\subsection{Sample Complexity}
\label{sec:uniform-sample}
We turn to analyze the sample complexity to learn a stable matching under the probably approximately correct (PAC) framework.
In particular we ask: given a probability budget $\alpha$, how many samples $T$ are needed to find a stable matching?
Formally, an algorithm has sample complexity $T$ with probability budget $\alpha$ if with probability at least $1 - \alpha$, the algorithm guarantees that it would find a stable matching with the total number of samples over all agent-arm pairs upper bounded by $T$.
 


\begin{restatable}{theorem}{thmsampleuni}\label{thm:sample-uni}[Sample complexity for uniform sampling algorithm]
    With probability at least $1 - \alpha$, both the uniform agent-DA and the uniform arm-DA algorithms find a stable matching with the same sample complexity $\tilde{O}(\frac{NK}{\Delta^2}\log(\alpha^{-1}))$\footnote{$\tilde{O}$ denotes the upper bound that omits terms logarithmic in the input.}.
\end{restatable}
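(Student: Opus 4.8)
The plan is to isolate a single high-probability ``clean'' event on which the stopping rule of \Cref{alg:equalExplore} simultaneously (a) certifies the correct preference ordering for every agent and (b) fires after a controlled number of pulls, and then to observe that a correct ordering forces either DA variant to output the true stable matching. Because the halting analysis never refers to which DA is run, the two variants inherit the same budget.

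First I would define the clean event $\mathcal{E}$: for every agent $a_i$, arm $b_j$, and every sample count $t$, the empirical mean satisfies $|\hat{\mu}_{i,j}(t) - \mu_{i,j}| \le \sqrt{2\beta\log(Kt)/t}$, i.e.\ the true mean lies inside its confidence interval from \Cref{def:ucb_lcb}. Since each reward is $1$-subgaussian, the average of $t$ pulls is $1/\sqrt{t}$-subgaussian, so the subgaussian tail gives $P(|\hat{\mu}_{i,j}(t)-\mu_{i,j}| \ge \sqrt{2\beta\log(Kt)/t}) \le 2(Kt)^{-\beta}$. A union bound over the $NK$ pairs and all $t\ge 1$ then yields $P(\mathcal{E}^c) \le 2N\,\zeta(\beta)\,K^{-(\beta-1)}$, and taking $\beta = \max\{2,\, 1 + (\log(4N) + \log(\alpha^{-1}))/\log K\}$ makes this at most $\alpha$ while keeping $\zeta(\beta) < 2$; crucially $\beta = O(\log(\alpha^{-1}))$ once the $\log N$ and $1/\log K$ pieces are absorbed into $\tilde{O}$.

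Next, working on $\mathcal{E}$, I would establish two facts. For \emph{correctness}, if the stopping condition of \Cref{alg:equalExplore} holds via a permutation $\sigma_i$, then for consecutive arms $LCB_{i,\sigma_i(k)} > UCB_{i,\sigma_i(k+1)}$ combined with $\mu_{i,\sigma_i(k)} \ge LCB_{i,\sigma_i(k)}$ and $\mu_{i,\sigma_i(k+1)} \le UCB_{i,\sigma_i(k+1)}$ forces $\mu_{i,\sigma_i(k)} > \mu_{i,\sigma_i(k+1)}$, so $\sigma_i$ is the true order of $a_i$. Hence $\hat\mu$ induces exactly the true preference orderings, and both the agent-proposing and arm-proposing DA run on $\hat\mu$ reproduce the true stable matching, which is stable. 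For \emph{timely stopping}, for any pair with $\mu_{i,j} > \mu_{i,k}$ the clean-event bounds give $LCB_{i,j} \ge \mu_{i,j} - 2\sqrt{2\beta\log(Kt)/t}$ and $UCB_{i,k} \le \mu_{i,k} + 2\sqrt{2\beta\log(Kt)/t}$, so the intervals separate as soon as $4\sqrt{2\beta\log(Kt)/t} < \Delta$, i.e.\ once $t > 32\beta\log(Kt)/\Delta^2$; solving this implicit inequality gives a per-arm stopping count $t^\ast = \tilde{O}(\beta/\Delta^2)$.

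Finally I would total the pulls. Uniform sampling has each of the $N$ agents pull each of the $K$ arms, so the algorithm halts after at most $T = NK\,t^\ast = \tilde{O}(NK\beta/\Delta^2)$ samples; substituting $\beta = O(\log(\alpha^{-1}))$ yields $T = \tilde{O}(\tfrac{NK}{\Delta^2}\log(\alpha^{-1}))$. Since this budget is attained on $\mathcal{E}$, which has probability at least $1-\alpha$, with probability $\ge 1-\alpha$ the algorithm stops within this many samples with every ordering correct, whereupon either DA variant returns a stable matching---identical complexity for the two, as claimed. The main obstacle I anticipate is the bookkeeping that ties $\beta$ to $\alpha$ through the union bound while keeping the final statement free of $\beta$: one must check that the $\log K$ in the denominator of $\beta$ and the $\log(Kt^\ast)$ produced when solving the implicit stopping inequality contribute only factors logarithmic in the input, so that they are legitimately hidden inside $\tilde{O}$ and the clean $\log(\alpha^{-1})$ dependence survives intact.
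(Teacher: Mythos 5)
Your proposal matches the paper's proof essentially step for step: the same clean event from subgaussian tails with a union bound over all $NK$ pairs and all sample counts $t$ (yielding the $2N\zeta(\beta)K^{-(\beta-1)}$ failure probability), the same LCB/UCB sandwich showing the stopping rule certifies the true orderings, the same implicit inequality $\sqrt{2\beta\log(Kt)/t}\le\Delta/4$ solved (the paper's technical lemma) to get $t^\ast=\tilde{O}(\beta/\Delta^2)$ per pair, the total $NKt^\ast$, and the same conversion $\beta = 1+\log(4N\alpha^{-1})/\log K$. The only nit is your phrase ``the true stable matching'' (singular): on the clean event the agent-proposing DA outputs $\overline{m}$ and the arm-proposing DA outputs $\underline{m}$, which may differ, but both are stable with respect to the true preferences, so your conclusion stands unchanged.
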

\begin{proof}
We first show that the uniform agent-DA algorithm finds the stable matching $\overline{m}$ with a high probability.
Then we analyze the total number of samples.
    If agent $a_i$ samples arm $b_j$ for $t$ times, by \cref{lem:subgaussian} and the definition of subgaussian, with probability at least $ 1 - \frac{2}{(Kt)^\beta}$ we have that $|\hat{\mu}_{i,j}(t) - \mu_{i,j}| \leq \sqrt{2\beta \log(Kt)/t}$, where $\hat{\mu}_{i,j}(t)$ is the sample average of agent $a_i$ over arm $b_j$ when $a_i$ samples $b_j$ for $t$ times. 
    Taking a union bound for all $i,j,t$ gives that with probability at least $1 - \frac{2N}{K^{\beta - 1}}\zeta(\beta)$,
    \begin{equation}\label{eq:union-bound}
        |\hat{\mu}_{i,j}(t) - \mu_{i,j}| \leq \sqrt{2\beta \log(Kt)/t}, \forall t \in \mathbb{N}, \forall i \in [N], \forall j \in [K].
    \end{equation}
    Conditioned on this, we first show that \Cref{alg:equalExplore} terminates with true preference profiles.
    Assume that the mechanism stops with sample size $T$ for each agent-arm pair.
    For any $i$ and $j \ne k$, if $\hat{\mu}_{i,j} > \hat{\mu}_{i,k}$, we have a stopping condition 
    \begin{equation}\label{eq:stop}
        \hat{\mu}_{i,j} - \hat{\mu}_{i,k} \geq 2\sqrt{2\beta \log(KT)/T},
    \end{equation}
    since by \Cref{def:ucb_lcb} and line 6 of \Cref{alg:equalExplore}
    \begin{equation*}
        LCB_{i,j} = \hat{\mu}_{i,j} - \sqrt{2\beta \log(KT)/T} \geq \hat{\mu}_{i,k} + \sqrt{2\beta \log(KT)/T} = UCB_{i,k}.
    \end{equation*}
    Then by \Cref{eq:union-bound}
    \begin{equation*}
        \mu_{i,j} \geq LCB_{i,j} \ge UCB_{i,k} \geq \mu_{i, k}. 
    \end{equation*}
     Hence, uniform agent-DA algorithm produces the stable matching $\overline{m}$.

    Then we compute the sample complexity. By \Cref{eq:union-bound} 
    \begin{equation}
    \label{eq:one-step}
        \hat{\mu}_{i,j} - \hat{\mu}_{i,k} \geq -2 \sqrt{2\beta \log(KT)/T} +\mu_{i,j} - \mu_{i,k}.
    \end{equation} 
Therefore if we set 
    \begin{equation*}
        T = \min\{ t \in \mathbb{N}: \sqrt{2\beta \log(Kt)/t} \leq \Delta/4 \},
    \end{equation*}
    we have that
    \begin{align*}
       \hat{\mu}_{i,j} - \hat{\mu}_{i,k} 
       &\geq -2 \sqrt{2\beta \log(KT)/T} +\mu_{i,j} - \mu_{i,k} \\
       &\geq -2 \sqrt{2\beta \log(KT)/T} + \Delta \\
       &\geq 2 \sqrt{2\beta \log(KT)/T}.
    \end{align*}
     and thus the stopping condition \ref{eq:stop} is satisfied.
    By \cref{lem:tech} we have $T =\min\{ t \in \mathbb{N}: \sqrt{2\beta \log(Kt)/t} \leq \Delta/4 \} = O(\frac{\beta}{\Delta^2}\log(\frac{\beta K}{\Delta^2}))$. 
Note that $T$ is the sample complexity for each agent-arm pair.
    Thus, the total number of samples are bounded by $NKT = O(\frac{\beta NK}{\Delta^2}\log(\frac{\beta K}{\Delta^2}))$. 

    By setting a probability budget $\alpha = \frac{4N}{K^{\beta - 1}}$, we have that with probability at least $1 - \alpha$, the uniform sampling algorithm terminates with sample complexity $ O(\frac{\beta NK}{\Delta^2}\log(\frac{\beta K}{\Delta^2}))$, where $\beta = 1 + \frac{\log(4N\alpha ^{-1})}{\log(K)}$. Therefore, the sample complexity for uniform agent-DA algorithm is $\tilde{O}(\frac{NK}{\Delta^2} \log(\alpha^{-1}))$.

    For uniform arm-DA, the only difference is that it produces a matching by arm-proposing DA algorithm. 
    Therefore, uniform arm-DA finds the stable matching $\underline{m}$ with the same sample complexity.
  
\end{proof}

Note that uniform agent-DA algorithm finds the stable matching $\overline{m}$, and uniform arm-DA algorithm finds the stable matching $\underline{m}$.
Uniform sampling (\Cref{alg:equalExplore}) suffers from sub-optimal sample complexity for finding stable matchings since agents sample each arm uniformly.
Thus, in the next section we devise an exploration algorithm that exploits the structure of stable matchings by utilizing arms' known preferences.

\section{An Arm Elimination DA Algorithm} \label{sec:AE}

 \begin{algorithm}[H]\small
\SetKwInOut{Input}{Input}
\Input{agent $a$, arms $b_1, b_2$, sample sizes $n_1, n_2$, and sample mean $\hat{\mu}_1, \hat{\mu}_2$.}
Calculate $LCB_1, LCB_2, UCB_1, UCB_2$\;
$winner \gets$ empty\;
\While{$\max(LCB_1, LCB_2) < \min(UCB_1, UCB_2)$}{
    $index \gets$ which.min$(n_1, n_2)$\;
    Agent $a$ pulls $b_{index}$\;
    Agent $a$ observes a return and updates $\hat{\mu}_{index}, n_{index}$\;
    Update all $UCB$ and $LCB$\;
}
$winner \gets$ index of maximum $(\hat{\mu}_1, \hat{\mu}_2)$\;
\Return{$winner.$}
\caption{Arm elimination algorithm}
\label{alg:ae}
\end{algorithm}
The proposed algorithm (\Cref{alg:ucb-ae}) combines the arm-proposing DA and Action Elimination (AE) algorithm~\citep{audibert2010best,even2006action, jamieson2014best}.  
The AE algorithm eliminates an arm (i.e. no longer sampling the arm) when confidence bound indicates that the arm is sub-optimal (i.e. the upper confidence bound is smaller than another arm's lower confidence bound), and outputs the best arm when there is only one arm that hasn't been eliminated.
%
%
Note that \Cref{alg:ucb-ae} differs from the vanilla arm-proposing DA in Line 8, when an agent has been proposed by two arms.
Agents utilizes the arm elimination algorithm (see Algorithm \ref{alg:ae}) until the agent eliminates the sub-optimal arm.
Note that at every round, each agent chooses an arm with fewer samples thus far (see Line 4 in \Cref{alg:ae}).
One significant observation is that if \Cref{alg:ae} outputs winners correctly whenever an agent is proposed, \Cref{alg:ucb-ae} terminates with the arm-optimal matching $\underline{m}$. 

\subsection{Probability Bounds for Stability}
\label{sec:ae-stab}
We compute the probability bound for learning an unstable matching for \Cref{alg:ucb-ae}. Contrary to uniform sampling, here we compute the bound on given sample size.
\begin{restatable}{theorem}{thmaestab}\label{thm:ae_stab}
By \Cref{alg:ucb-ae}, assume that agent $a_i$ samples arm $b_j$ for $T_{i,j}$ and $\hat{m}$ is returned by the algorithm. 
We define $T_{min} = \min_{(a_i, b_j) \in ES(\hat{m})}T_{i,j}$ as the minimum sample size for agent-arm pairs. Then, we have 
\begin{equation*}
    P(\hat{m} \ \textit{is unstable}) \le O( |ES(\underline{m})|\exp(-\frac{\Delta^2 T_{min}}{8})).
\end{equation*}
\end{restatable}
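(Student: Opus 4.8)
The plan is to mirror the structure of the proof of \Cref{thm:general_case}(ii) as closely as possible, since both statements bound the probability of instability in terms of the size of the envy-set and a subgaussian tail. The key conceptual point is that \Cref{lem:stable} is sampling-method agnostic: it only requires that $\hat{m}$ be stable with respect to the sample average $\hat{\mu}$, and that the ``good'' event $\mathcal{F}(ES(\hat{m}))$ holds. By the observation noted just before \Cref{sec:ae-stab}, if \Cref{alg:ae} outputs the correct winner whenever an agent is proposed, then \Cref{alg:ucb-ae} terminates with a matching that is stable with respect to $\hat{\mu}$; more directly, the returned $\hat{m}$ is produced by an (arm-proposing) DA run on $\hat{\mu}$, hence is stable with respect to $\hat{\mu}$. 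This lets me invoke \Cref{lem:stable} to conclude that instability of $\hat{m}$ implies the failure of some good event $\mathcal{F}_{i,j}$ over the pairs in $ES(\hat{m})$.

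First I would write $P(\hat{m}\ \textit{is unstable}) \le 1 - P(\mathcal{F}(ES(\hat{m})))$, then expand the complement by a union bound over the pairs $(a_i,b_j) \in ES(\hat{m})$, exactly as in \Cref{thm:general_case}:
\begin{equation*}
P(\hat{m}\ \textit{is unstable}) \le \mathbb{E}\Big[\sum_{(a_i,b_j)\in ES(\hat{m})} P\big(|\mu_{i,j}-\hat{\mu}_{i,j}| \ge \Delta/2\big)\Big].
\end{equation*}
Next I would apply the subgaussian tail bound to each term. The essential difference from the uniform case is the sample count: instead of every pair having exactly $T/K$ samples, the pair $(a_i,b_j)$ has $T_{i,j}$ samples, so $\mu_{i,j}-\hat{\mu}_{i,j}$ is $1/\sqrt{T_{i,j}}$-subgaussian and $P(|\mu_{i,j}-\hat{\mu}_{i,j}|\ge \Delta/2) \le 2\exp(-\Delta^2 T_{i,j}/8)$. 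Bounding each $T_{i,j} \ge T_{min}$ for $(a_i,b_j)\in ES(\hat{m})$ replaces the per-pair count by the worst case and yields a uniform factor $\exp(-\Delta^2 T_{min}/8)$, giving $P(\hat{m}\ \textit{is unstable}) \le 2\,\mathbb{E}[|ES(\hat{m})|]\exp(-\Delta^2 T_{min}/8)$.

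The remaining step is to replace $\mathbb{E}[|ES(\hat{m})|]$ by $|ES(\underline{m})|$. Since \Cref{alg:ucb-ae} is arm-proposing and terminates with $\underline{m}$ whenever the comparisons it actually performs are resolved correctly, I would argue as in \Cref{thm:general_case} that $|\mathbb{E}[|ES(\hat{m})|] - |ES(\underline{m})||$ is controlled by $NK \cdot P(\hat{m}\ne\underline{m})$, and that a mismatch forces some performed pairwise comparison to be wrong, whose probability decays like the same subgaussian tail; this difference is negligible for $T_{min}$ large, absorbing into the $O(\cdot)$. I expect the main obstacle to be this last step: unlike uniform sampling, \Cref{alg:ucb-ae} does not sample every pair, so I must verify that the only comparisons relevant to whether $\hat{m}=\underline{m}$ are precisely those adjudicated by the arm-elimination subroutine (Line 8), and that each such comparison is correct with high probability under the collected samples — i.e., that the elimination rule cannot output a wrong winner except on a rare tail event. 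Once that is established, the bound $P(\hat{m}\ \textit{is unstable}) = O(|ES(\underline{m})|\exp(-\Delta^2 T_{min}/8))$ follows exactly as claimed.
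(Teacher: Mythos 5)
Your proposal matches the paper's proof essentially step for step: the paper likewise observes that $\hat{m}$ is the arm-proposing DA matching on $\hat{\mu}$ (hence stable w.r.t.\ $\hat{\mu}$), applies \Cref{lem:stable} with a union bound and the $1/\sqrt{T_{i,j}}$-subgaussian tail bounded via $T_{i,j} \ge T_{min}$, and then controls $P(\hat{m} \ne \underline{m})$ by the event that some pairwise comparison among pairs in $ES(\hat{m})$ is resolved incorrectly, using \Cref{lem:comp} to show this is $O(NK^2 \exp(-\Delta^2 T_{min}/4))$ and hence negligible. Even the ``obstacle'' you flag — that only comparisons actually adjudicated by \Cref{alg:ae} matter — is resolved exactly as you suggest, since the paper restricts the mismatch event to pairs $(a_i,b_j), (a_i,b_{j'}) \in ES(\hat{m})$ whose comparisons the algorithm performs.
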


\begin{proof}
    When the sampling phase of \Cref{alg:ucb-ae} ends, there is no unmatched agent and the matching $\hat{m}$ is the same matching proposed by arm-proposing DA according to estimated utility $\hat{\mu}$. Thus, $\hat{m}$ is stable with respect to $\hat{\mu}$. By  \cref{lem:stable}, we know that
    $\hat{m}$ is stable under the event $\mathcal{F}(ES(\hat{m}))$. Then
    \begin{align*}
        P(\hat{m} \ \textit{is unstable}) 
        &\le
        1 - P(\mathcal{F}(ES(\hat{m}))) \\
        &= \mathbb{E}[\cup_{(a_i, b_j) \in ES(\hat{m})}P(|\mu_{i,j} - \hat{\mu}_{i,j}| \ge \Delta/2)]\\
        &\le       \mathbb{E}[\sum_{(a_i, b_j) \in ES(\hat{m})}P(|\mu_{i,j} - \hat{\mu}_{i,j}| \ge \Delta/2)]\\ 
        &\le
        2 \mathbb{E}[\sum_{(a_i, b_j) \in ES(\hat{m})}\exp(-\frac{\Delta^2 T_{i,j}}{8})] \\
        &\le 2  \mathbb{E}[|ES(\hat{m})|\exp(-\frac{\Delta^2 T_{min}}{8})].
    \end{align*} 
    The third line comes from union bound, and the fourth line comes from \cref{lem:subgaussian} and that $\mu_{i,j} - \hat{\mu}_{i,j}$ is $\sqrt{\frac{1}{T_{i,j}}}$-subgaussian.
    Observe that correct estimated profile on $ES(\hat{m})$ outputs $\underline{m}$ by \Cref{alg:ucb-ae} since the algorithm follows arm-DA algorithm.
    Then we have the computation
\begin{align*} 
P(\hat{m} \ne \underline{m}) 
&\leq P(\exists (a_i,b_j) \in ES(\hat{m}), (a_i,b_{j'}) \in ES(\hat{m}), \ \textit{and} \ \mu_{i,j} > \mu_{i,j'}, \hat{\mu}_{i,j} < \hat{\mu}_{i,j'}) \\
    &\le \sum_{(a_i, b_j), (a_i, b_{j'}) \in ES(\hat{m})} P(\mu_{i,j} > \mu_{i,j'}, \hat{\mu}_{i,j} < \hat{\mu}_{i,j'}) \\
    &\le \sum_{(a_i, b_j), (a_i, b_{j'}) \in ES(\hat{m})} exp(-\frac{\Delta^2}{2(\frac{1}{T_{i,j}} + \frac{1}{T_{i,j'}})}) \\
    &\le NK^2 exp(-\frac{\Delta^2 T_{min}}{4}),
\end{align*}
where the third line follows from \cref{lem:comp}. 
Therefore, we have that $\hat{m}$ converges to $\underline{m}$ and so the probability of not being $\underline{m}$ is negligible when $T_{min}$ is sufficiently large.
Thus, we complete the proof.
\end{proof}

\begin{remark}
    Theorem \ref{thm:ae_stab} provides stability bound for \Cref{alg:ucb-ae} that depends on $T_{i,j}$, which is unknown apriori. If the total sample budget is $NT$ and we set $T_{i,j} = \frac{NT}{|ES(\underline{m})|}$, the stability bound becomes $O(|ES(\underline{m})|exp(-\frac{\Delta^2NT}{8|ES(\underline{m})|}))$, which is smaller than uniform arm-DA's stability bound $O(|ES(\underline{m})|exp(-\frac{\Delta^2T}{8K}))$, as stated in \Cref{thm:general_case}. Even though the upper bound could be larger than that of \Cref{alg:equalExplore}, simulated experiments show that AE arm-DA significantly improves stability guarantees compared to the uniform sampling variants (\Cref{alg:equalExplore}).
\end{remark}

\subsection{Sample Complexity}
\label{sec:ae-sample}

We compute the sample complexity to learn a stable matching for \Cref{alg:ucb-ae}.
Note that agents only sample pairs in the envy-set, while in \Cref{alg:equalExplore} agents explore all arms uniformly.
The following analysis shows that \Cref{alg:ucb-ae} has smaller sample complexity compared to \Cref{alg:equalExplore}.
\begin{restatable}{theorem}{thmsamplecomplxAE}[Sample complexity for AE arm-DA algorithm]\label{thm:sample-ae}
    With probability at least $ 1 - \alpha$, \Cref{alg:ucb-ae} terminates and returns a stable matching, $\underline{m}$, with  sample complexity of
    \begin{align*}
        \tilde{O}(\frac{ES(\underline{m})}{\Delta^2}\log(\alpha^{-1})).
    \end{align*}
\end{restatable}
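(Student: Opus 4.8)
The plan is to follow the template of \Cref{thm:sample-uni}, but to exploit the fact that under \Cref{alg:ucb-ae} each agent $a_i$ only ever samples arms that propose to it, which—once the arm-proposing DA has converged to $\underline{m}$—are exactly the arms in its local envy-set $ES_i(\underline{m})$. First I would set up the standard good event: applying \cref{lem:subgaussian} together with a union bound over all agents, arms, and sample counts $t$, I obtain that with probability at least $1 - \frac{2N}{K^{\beta-1}}\zeta(\beta)$ the bound $|\hat{\mu}_{i,j}(t) - \mu_{i,j}| \le \sqrt{2\beta\log(Kt)/t}$ holds simultaneously for all $i,j,t$. Choosing $\beta = 1 + \frac{\log(4N\alpha^{-1})}{\log K}$ makes this probability at least $1 - \alpha$, exactly as in \Cref{thm:sample-uni}.

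Conditioned on this good event, I would next argue correctness and termination. The arm-elimination subroutine (\Cref{alg:ae}) halts only when the confidence intervals of the two competing arms separate, i.e. $\max(LCB_1,LCB_2) \ge \min(UCB_1,UCB_2)$; since the intervals are valid on the good event, the arm it declares the winner is genuinely the more-preferred of the two. Because every pairwise comparison performed during the arm-proposing DA is then resolved correctly, the observation preceding this section guarantees that \Cref{alg:ucb-ae} terminates with the arm-optimal stable matching $\underline{m}$.

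The crux is the sample count. For any single comparison the two arms differ in true utility by at least $\Delta$, so the same calculation as in \Cref{thm:sample-uni}—invoking \cref{lem:tech}—shows the subroutine separates them after $O\!\left(\frac{\beta}{\Delta^2}\log(\frac{\beta K}{\Delta^2})\right)$ pulls. The efficiency gain comes from the fact that \Cref{alg:ae} takes the running counts $(n_1,n_2)$ and means $(\hat{\mu}_1,\hat{\mu}_2)$ as input, so samples collected in one comparison are reused in later ones; consequently the total number of pulls an agent spends on any fixed arm is capped by the effort needed to separate it from its closest competitor, again $O\!\left(\frac{\beta}{\Delta^2}\log(\frac{\beta K}{\Delta^2})\right)$. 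Since on the good event agent $a_i$ is proposed to—and hence samples—only arms in $ES_i(\underline{m})$, the number of distinct agent-arm pairs ever pulled is $|ES(\underline{m})|$, giving a total sample complexity of $|ES(\underline{m})| \cdot O\!\left(\frac{\beta}{\Delta^2}\log(\frac{\beta K}{\Delta^2})\right) = \tilde{O}\!\left(\frac{|ES(\underline{m})|}{\Delta^2}\log(\alpha^{-1})\right)$ after substituting $\beta$.

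The main obstacle I expect is the per-pair accounting: one must verify that an agent is never forced to re-sample an arm from scratch across its repeated comparisons, and that no arm outside the envy-set is ever pulled. The first point rests on the sample-reuse structure of \Cref{alg:ae}, and the second on the characterization of $ES_i(\underline{m})$ as precisely the set of arms that can propose to $a_i$ under arm-proposing DA converging to $\underline{m}$; making both rigorous is where the bulk of the work lies, as it is what lets the $NK$ factor in the uniform bound collapse to $|ES(\underline{m})|$.
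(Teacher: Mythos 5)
Your proposal is correct and follows essentially the same route as the paper's proof: a concentration good event via \cref{lem:subgaussian} and a union bound, correctness of each pairwise comparison implying convergence to $\underline{m}$, a per-pair sample cap of $O(\frac{\beta}{\Delta^2}\log(\frac{\beta K}{\Delta^2}))$ via \cref{lem:tech}, and multiplication by $|ES(\underline{m})|$ since only envy-set pairs are sampled. The only differences are cosmetic: you union-bound over all $NK$ pairs (the paper unions only over $ES(\underline{m})$, changing only the choice of $\beta$ inside the logarithm), and the per-pair accounting with sample reuse that you correctly flag as the crux is exactly what the paper formalizes by induction on the while-loop iterations of \Cref{alg:ucb-ae}, using the fewer-samples-first rule in Line 4 of \Cref{alg:ae}.
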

%
\begin{proof}
The proof has similar structure to the proof of \Cref{thm:sample-uni}; however, the proof is different because \Cref{alg:ucb-ae} only sample the pairs in the envy set $ES(\underline{m})$ while \Cref{alg:equalExplore} samples all arms uniformly.

We first show that with a high probability, the algorithm terminates with $\underline{m}$. If agent $a_i$ samples arm $b_j$ for $t$ times, by \cref{lem:subgaussian} and the definition of subgaussian, we have that $|\hat{\mu}_{i,j}(t) - \mu_{i,j}| \leq \sqrt{2\beta \log(Kt)/t}$ with probability at least $ 1 - \frac{2}{(Kt)^\beta}$, here $\mu_{i,j}(t)$ is the sample average of agent $a_i$ over arm $b_j$ when $a_i$ samples $b_j$ for $t$ times. 
    Taking a union bound over all $t$ and all pairs $a_i, b_j$ in the envy set $ES(\underline{m})$, we have that 
    \begin{equation}
        |\hat{\mu}_{i,j}(t) - \mu_{i,j}| \leq \sqrt{2\beta \log(Kt)/t}, \forall t \geq 1, \forall (a_i, b_j) \in ES(\underline{m})        \label{eq:ucb_ae}
    \end{equation}
with probability at least $1 - \frac{2 |ES(\underline{m})|}{K^{\beta }} \zeta(\beta)$.
Conditioned on \Cref{eq:ucb_ae}, we have that \Cref{alg:ucb-ae} terminates with correct estimated profile on pairs of the envy-set $ES(\underline{m})$ as the same logic in the first part of the proof in \Cref{thm:sample-uni}.
Thus, since \Cref{alg:ucb-ae} follows arm-proposing DA algorithm, the algorithm outputs $\underline{m}$. 

Next, we claim that for every agent-arm pair in the envy-set $ES(\underline{m})$, the sample complexity is $T =\min\{ t \in \mathbb{N}: \sqrt{2\beta \log(Kt)/t} \leq \Delta/4 \} = O(\frac{\beta}{\Delta^2}\log(\frac{\beta K}{\Delta^2}))$. 
 We prove this for each agent-arm pair by induction on the number of iterations of while loop of \Cref{alg:ucb-ae}. We will show that in each invocation of \Cref{alg:ae}, the number of samples for any agent-arm pair in the envy-set $ES(\underline{m})$ is at most $T$. Thus, since the while loop in \Cref{alg:ucb-ae} checks whether the sample number is bounded by $T$, we get the desired bound.

\paragraph{Base case.} Let arms $b_j$ and $b_{j'}$ are the first to propose to agent $a_i$.
Let $T_{i,j}$ and $T_{i,j'}$ be the number of samples for agent $a_i$ over arm $b_j$ and $b_{j'}$ by Algorithm \ref{alg:ae}.
Without loss of generality, we assume that $\mu_{i,j} > \mu_{i,j'}$. Since both arms have not been sampled before, by \Cref{alg:ae} the stopping condition \Cref{eq:stop} is satisfied when $T_{i,j} = T_{i,j'} = T$ since
\begin{align*}
    \hat{\mu}_{i,j} - \hat{\mu}_{i,j'} &\geq -2\sqrt{2\beta \log(KT)/T} + \mu_{i,j} - \mu_{i,j'} \\
    &\geq -2\sqrt{2\beta \log(KT)/T} + \Delta \\
    &\geq 2\sqrt{2\beta \log(KT)/T},
\end{align*} 
where the first line comes from \Cref{eq:ucb_ae}, and the third line comes from the definition of $T$.

\paragraph{Inductive steps.} By induction hypothesis, in \Cref{alg:ucb-ae}, agent $a_i$ has already sampled $b_j$ for $t_{i,j} \leq T$ times. Let $b_j$ be the winner in the last round and $b_{j'}$ proposes to $a_i$ in this round.
In Line 4 of \Cref{alg:ae}, agent $a_i$ samples $t_{i,j}$ times of arm $b_{j'}$ before sampling $b_j$. 
Then if $a_i$ samples $b_j$ for $T - t_{i,j}$ more times and $a_i$ samples $b_{j'}$ $T$ times, by the same computation as the base case, we have that the stopping condition \Cref{eq:stop} is satisfied when $T_{i,j} = T_{i,j'} = T$.
Thus, we show that the number of samples for any agent-arm pair is bounded by $T$.

Since \Cref{alg:ucb-ae} only samples the agent-arm pairs in the envy set $ES(\underline{m})$, we get that the total sample complexity is $ |ES(\underline{m})| T = O(\frac{ \beta (|ES(\underline{m})|)}{\Delta^2} \log(\frac{ \beta K}{\Delta^2} ))$.

 By setting the probability budget $\alpha = \frac{4 |ES(\underline{m})|}{K^{\beta}}$, we have that with probability at least $1- \alpha$, the AE arm-DA algorithm has sample complexity complexity of the order $\frac{\beta |ES(\underline{m})|}{\Delta^2} \log(\frac{\beta K}{\Delta^2})$, where $\beta = \frac{\log(4 |ES(\underline{m})|\alpha^{-1})}{\log(K)}$.
 Therefore, it is of the order $\tilde{O}(\frac{|ES(\underline{m})|}{\Delta^2} \log(\alpha^{-1}))$. 
\end{proof}

\begin{algorithm}[t]\small
    \SetKwInOut{Input}{Input}
    \Input{arms' preference lists, sample budget $T$.}
    $m \gets$ empty matching\;
\While{$\exists$ an unmatched arm $b$ who has not proposed to every agent and sample number $\le T$}{
    $a \gets$ highest-ranked agent to whom $b$ has not yet proposed\;
    \eIf{$a$ is unmatched}{
        Add $(a, b)$ to $S$\;
    }{
        $b' \gets$ arm currently matched to $a$\;
        $a$ finds the preferred arm from $\{b, b' \}$ using arm elimination, Algorithm \ref{alg:ae}\;
        \If{$a$ prefers $b$ to $b'$}{
            Replace $(a, b')$ in $m$ with $(a, b)$\;
            Mark $b'$ as unmatched\;
        }
    }
}
Arbitrarily match remaining agents and arms if $m$ is incomplete;\\
\Return{$m$}
    \caption{AE arm-DA algorithm}\label{alg:ucb-ae}
\end{algorithm}




It is worth noting that a large $\beta$ implies a small $\alpha$, which implies that the algorithm needs more samples to guarantee finding a stable matching. 
We show bounds on the envy-sets in the next lemma.
\begin{restatable}{lemma}{lembestworst}
        Considering any true preference $\mu$, we have the following bounds for envy-set: 
        \begin{enumerate}[(i)]
            \item  Size of the envy-set for  $\overline{m}$: $ (max \{ N, K\}-N)N\leq |ES(\overline{m})| \leq NK$.
            \item Size of the envy-set for $\underline{m}$:   $(max \{ N, K\}-N)N \leq |ES(\underline{m})| \leq NK - N + 1$.
        \end{enumerate} 
        \label{lem:best_worst}
    \end{restatable}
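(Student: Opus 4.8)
The plan is to establish the two lower bounds together and the two upper bounds separately, the bound for $\underline{m}$ being the only nontrivial one. First, since every agent is matched, exactly $N$ arms are matched and, by the Rural-Hospital theorem, the same $K-N$ arms stay unmatched in both $\overline{m}$ and $\underline{m}$. For any unmatched arm $b_j$ and any agent $a_i$ we have $a_i \succ_{b_j} \emptyset = m(b_j)$, so $b_j \in ES_i(m)$ and the pair $(a_i,b_j)$ lies in $ES(m)$; this already places $(K-N)N$ pairs in the envy-set, giving the common lower bound $(\max\{N,K\}-N)N$ (which equals $(K-N)N$ since the model forces $N\le K$, and is vacuous otherwise). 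The upper bound $|ES(\overline{m})|\le NK$ is immediate, as $ES(\overline{m})$ is a set of agent-arm pairs and there are only $NK$ of them.

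For the upper bound on $|ES(\underline{m})|$ I would first reduce it to a clean counting claim. For each agent set $d_i=|\{b_j: a_i\succ_{b_j}\underline{m}(b_j)\}|$ and for each arm set $c_j=|\{a_i: a_i\succ_{b_j}\underline{m}(b_j)\}|$; by definition $ES_i(\underline{m})=\emptyset$ iff $d_i=0$, and otherwise $|ES_i(\underline{m})|=d_i+1$ (the arm $\underline{m}(a_i)$ is never counted in $d_i$). Writing $S=\{i:d_i>0\}$ this gives the identity $|ES(\underline{m})|=\sum_i d_i+|S|$. Double counting the same incidences by arms yields $\sum_i d_i=\sum_j c_j=\sum_{b_j\text{ matched}}c_j+(K-N)N$, since each unmatched arm prefers all $N$ agents to $\emptyset$ and so contributes $c_j=N$. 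Using $|S|\le N$, it therefore suffices to prove $\sum_{b_j\text{ matched}}c_j\le(N-1)^2$: substituting gives $|ES(\underline{m})|\le (N-1)^2+(K-N)N+N=NK-N+1$.

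To prove $\sum_{b_j\text{ matched}}c_j\le(N-1)^2$ I would restrict, via the Rural-Hospital theorem (exactly the reduction used in \Cref{thm:uniqcon}), to the balanced submarket of the $N$ agents and $N$ matched arms, on which $\underline{m}$ is still the arm-optimal stable matching. Among the matched arms there are exactly $N(N-1)$ non-diagonal agent-arm pairs, and by strictness each is either an \emph{arm-envy} pair (the arm prefers the agent to its partner, counted by $\sum c_j$) or an \emph{arm-content} pair (the arm prefers its partner to the agent). Hence $\sum_{b_j\text{ matched}}c_j\le(N-1)^2$ is equivalent to exhibiting at least $N-1$ arm-content pairs. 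I would obtain all of them from a single agent $a^*$ that no matched arm prefers to its own partner: for each of the $N-1$ matched arms $b\neq\underline{m}(a^*)$ we then have $\underline{m}(b)\succ_b a^*$, i.e. $(a^*,b)$ is arm-content.

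The crux is establishing such an unenvied agent $a^*$, and here I would run the arm-proposing DA on the balanced submarket (it outputs exactly $\underline{m}$) and inspect the \emph{last} accepted proposal. In a balanced market with strict complete preferences the algorithm terminates with a perfect matching, so this final proposal cannot displace a held arm (the displaced arm would have to keep proposing, contradicting finality); thus it is accepted by an agent $a^*$ who was unmatched until that moment and who therefore received exactly one proposal during the entire run, necessarily from $\underline{m}(a^*)$. If some other matched arm $b$ preferred $a^*$ to its eventual partner, then $a^*$ would sit above that partner in $b$'s list and $b$ would have proposed to $a^*$ before settling, contradicting that $a^*$ received only one proposal; hence no matched arm envies $a^*$. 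This last-proposal argument, combined with the counting identity above, completes $|ES(\underline{m})|\le NK-N+1$, and I expect it to be the main obstacle, since it is the only place where arm-optimality (rather than mere stability) is genuinely used.
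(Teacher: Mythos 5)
Your treatment of the lower bounds and of $|ES(\overline{m})|\le NK$ is fine, and in the balanced case $N=K$ your argument for $|ES(\underline{m})|\le NK-N+1$ is complete and correct: the counting identity $|ES(\underline{m})|=\sum_j c_j+|S|$, together with the last-accepted-proposal argument (the final proposal in arm-proposing DA must go to a previously unproposed-to agent $a^*$, so no matched arm prefers $a^*$ to its own partner), gives $\sum_{b_j\,\text{matched}} c_j\le (N-1)^2$ and hence the bound. This is a genuinely different route from the paper's: the paper exhibits extremal profiles attaining the bounds and, for the upper bound on $|ES(\underline{m})|$, invokes the Gale--Shapley observation that at most one proposer can end matched to its last choice, leaving the counting implicit. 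Your version proves the inequalities for \emph{every} profile (which is what the statement literally asserts) and makes the bookkeeping explicit; on the other hand, unlike the paper, you do not establish attainability of the bounds, which the paper's constructions provide and which a later remark uses.

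However, your reduction for $K>N$ contains a genuine error: after deleting the arms unmatched under the Rural-Hospitals theorem, $\underline{m}$ need \emph{not} be the arm-optimal stable matching of the balanced submarket. Deleting such arms can create new submarket-stable matchings (matchings blocked in the original market only through a deleted arm), and the submarket's arm-optimal matching can then be strictly better for arms than $\underline{m}$. Concretely, take $N=2$, $K=3$ with $a_1: b_1\succ b_3\succ b_2$, $a_2: b_2\succ b_1\succ b_3$, $b_1: a_2\succ a_1$, $b_2: a_1\succ a_2$, $b_3: a_1\succ a_2$. The unique stable matching is $\underline{m}=\{(a_1,b_1),(a_2,b_2)\}$ with $b_3$ unmatched, yet the submarket on $\{b_1,b_2\}$ has arm-optimal stable matching $\{(a_1,b_2),(a_2,b_1)\}\neq\underline{m}$, so running DA on the submarket certifies an unenvied agent for the wrong matching. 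Indeed, no agent unenvied with respect to $\underline{m}$ exists here ($b_1$ prefers $a_2$ to its partner, $b_2$ prefers $a_1$), and one computes $|ES(\underline{m})|=6>NK-N+1=5$: the inequality you are trying to prove actually fails for $K>N$ (the worst case there is $NK$, attained by this instance), so the step cannot be repaired in that regime. Notably, the paper's own proof shares this blind spot---its claim that at most one matched arm ends at its worst choice holds in balanced markets but fails in this instance, where both $b_1$ and $b_2$ are matched to their worst agents---so both your argument and the paper's are valid exactly when $N=K$.
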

    \begin{proof}
        First we consider the best case and $N \geq K$. Suppose that for any $i \in [K]$, agent $a_i$ prefers arm $b_i$ the most; also for any $j \in [K]$, arm $b_j$ prefers agent $a_j$ the most.
        Under this preference profile, we immediately have $\overline{m} = \underline{m} = \{ (a_1,b_1), (a_2, b_2), \ldots, (a_K, b_K)\}$ and, since all arms match their first choice, we have $|ES(\overline{m})| = |ES(\underline{m})| = 0$. if $K > N$, we construct the preference profile similarly for the first $N$ agents and $N$ arms, then the remaining $K-N$ arms are not matched, and so $|ES(\overline{m})| = |ES(\underline{m})| = (K-N)N$.

        Now consider the worst case for $\overline{m}$. 
        We keep agents' preferences the same as stated above; however, agent $a_j$ is the worst agent according to $b_j$ for each $j \in [min \{ N, K\}]$. Then $\overline{m} = \{ (a_1,b_1), (a_2, b_2), \ldots, (a_{min \{ N, K\}}, b_{min \{ N, K\}})\}$. Thus, for each $j \in [K]$, arm $b_j$ is in the envy-set  $ES_i(\overline{m})$ of agent $a_i$ for all $i \in [N]$. Then, envy-set $ES(\overline{m})$ contains all agent-arm pairs and so in the worst case $|ES(\overline{m})| = NK$.

        Lastly, we show the worst case for $\underline{m}$. 
        We claim that for all matched arms, at most one arm can get matched to the worst choice. 
        In their seminal work, \citet{gale1962college} observed that when one arm gets matched to the worst agent, it proposes to all other agents and get rejected by them. This observation implies all other agents are tentatively matched and thus no agent is available. 
        Therefore, we have that the worst case is $|ES(\underline{m})| = NK - N + 1$. 
    \end{proof}

\begin{remark}
    By comparing \Cref{thm:sample-ae} to \Cref{thm:sample-uni}, the sample complexity ratio between the AE arm-DA (\Cref{alg:ucb-ae}) and uniform arm selection (\Cref{alg:equalExplore}) is    $\frac{|ES(\underline{m})|}{NK}$, which further shows that fewer arms from the envy-set $ES(\underline{m})$ need to be sampled.
    \Cref{lem:best_worst} states the best-case and worse-case ratios as $\frac{max \{N,K\}-N}{K}$ and $1 -\frac{N-1}{NK}< 1$. Thus, \Cref{alg:ucb-ae} strictly improves the sample complexity of finding a stable matching.

\end{remark}



\begin{remark}
    One can illustrate the magnitude of  $|ES(\underline{m})|$ through the lens of arm-proposing DA algorithm.
    Observe that $|ES(\underline{m})|$ is the number of proposals made by arms and rejections made by agents in the arm-proposing DA algorithm. 
 In a highly competitive environment for arms, e.g. when there are much more arms than agents so that many arms are not matched, the magnitude of $|ES(\underline{m})|$ is large.
 In a less competitive environment, e.g. when arms put different agents as their top choices, $|ES(\underline{m})|$ has much smaller magnitude.
\end{remark}

\section{Experimental Results} \label{sec:experiments}

In this section, we experimentally validate our theoretical results.
For this, we consider $N = K =20$ and randomly generate preferences.
In particular, we follow a similar experiment setting in \citet{liu2021bandit}: for each $i$, the true utilities $\{ \mu_{i,1},\mu_{i,2}, \ldots, \mu_{i,20} \}$ are randomized permutations of the sequence $\{ 1, 2,\ldots, 20 \}$. 
Arms' preferences are generated the same way.
We conduct 200 independent simulations, with each simulation featuring a randomized true preference profile.
We compare average stability, i.e., the proportion of stable matchings over $200$ experiments, average regrets, and maximum regrets over agents between four algorithms: uniform agent-DA, uniform arm-DA, AE arm-DA, and CA-UCB \citep{liu2021bandit}.

In terms of stability, our experiments show that the AE arm-DA algorithm significantly enhances the likelihood of achieving stability compared to both types of uniform sampling algorithm and CA-UCB algorithm (\cref{fig:general_exp}). On the other hand, the regret gap between uniform agent-DA and other two arm-proposing types of algorithms illustrates the utility difference of agent-optimal matching $\overline{m}$ and arm-optimal matching $\underline{m}$.
%
We note that when preferences are restricted to have unique stable matching, AE arm-DA algorithm's regret converges faster to $0$, compared to uniform algorithms, while still keeping faster stability convergence (\Cref{fig:spc_exp}).
Additional experiments with other preference domains (e.g. masterlist) in provided in \cref{sec:supplemental-experiment}.

\begin{figure}[t!]
    \centering
    \begin{subfigure}{0.28\textwidth} 
        \centering
        \includegraphics[width=\linewidth,height=1.2in]{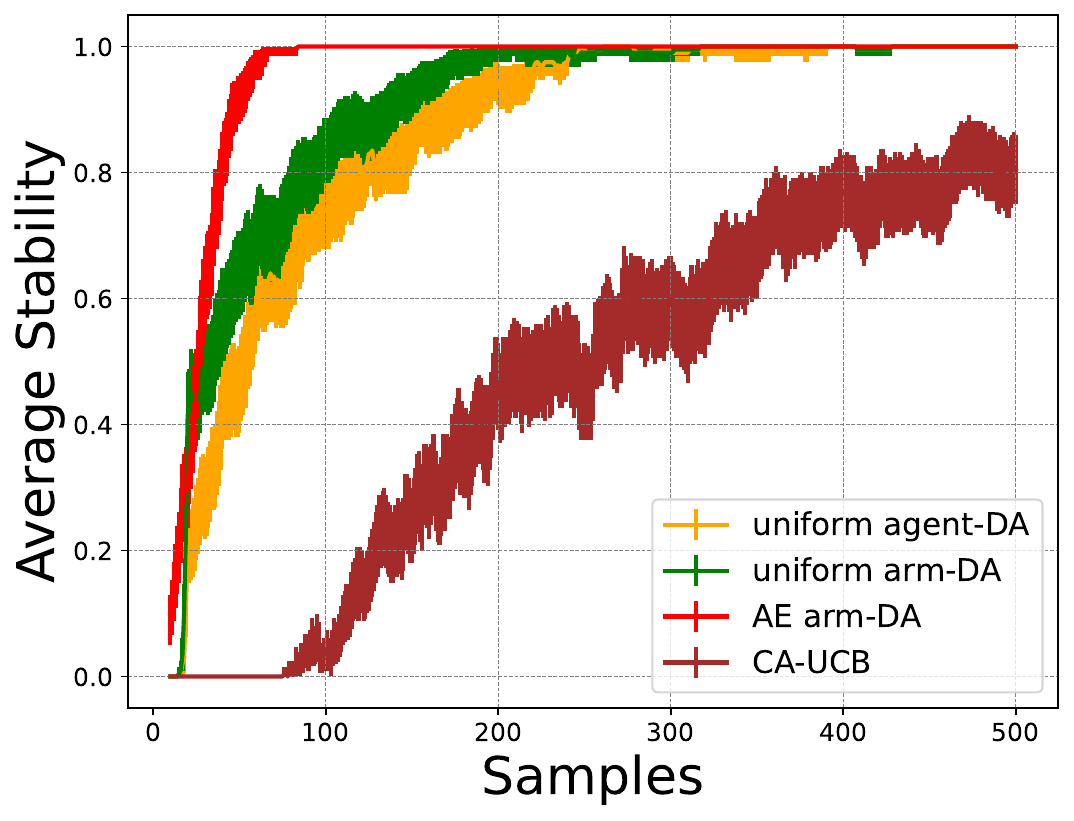} 
    \end{subfigure}
    \hfill 
    \begin{subfigure}{0.28\textwidth} 
        \centering
        \includegraphics[width=\linewidth,height=1.2in]{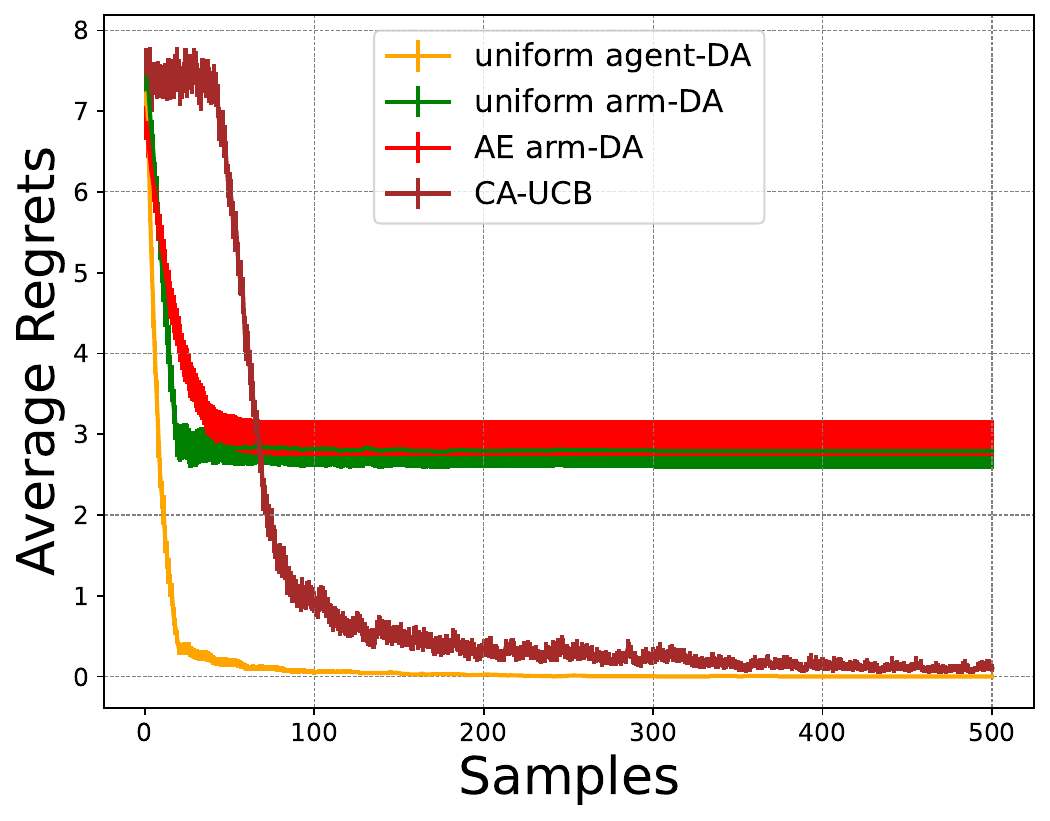} 
    \end{subfigure}
    \hfill 
    \begin{subfigure}{0.28\textwidth} 
        \centering
        \includegraphics[width=\linewidth,height=1.2in]{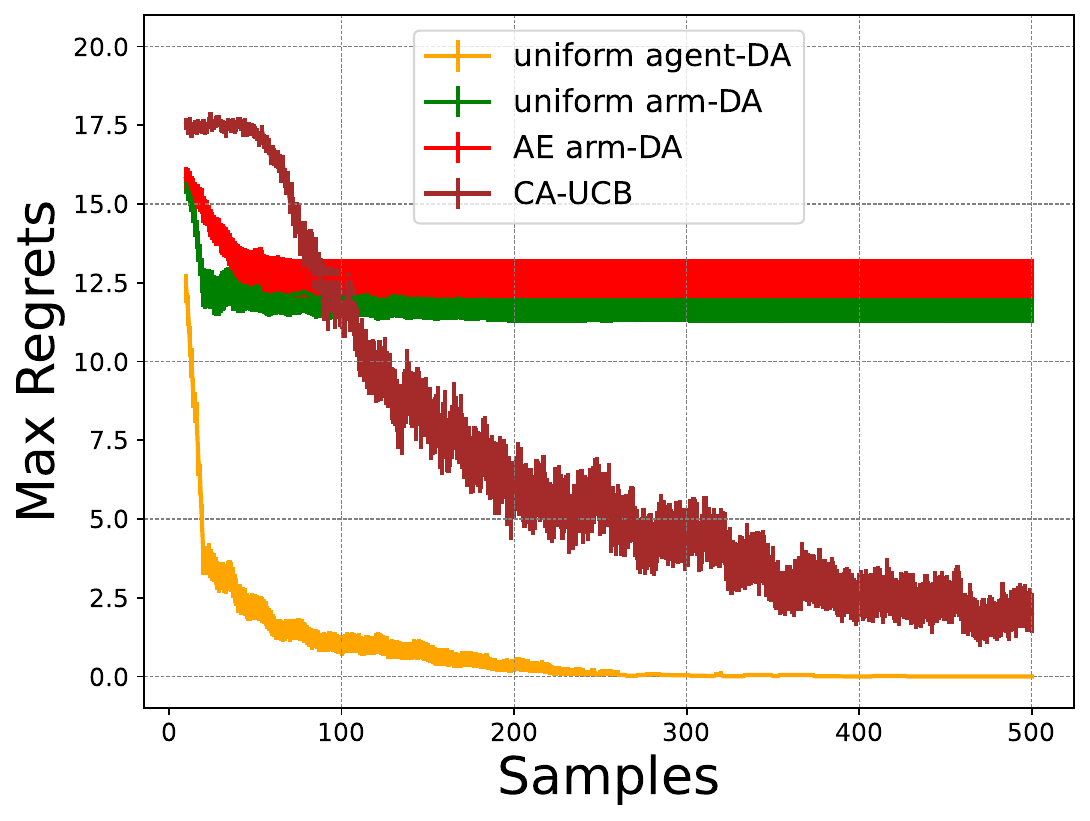} 
    \end{subfigure}
    \caption{$95 \%$ confidence interval of stability and regret for $200$ randomized general preference profiles. }
    \label{fig:general_exp}
\end{figure}

\begin{figure}[t!]
    \centering
    \begin{subfigure}{0.28\textwidth} 
        \centering
        \includegraphics[width=\linewidth,height=1.2in]{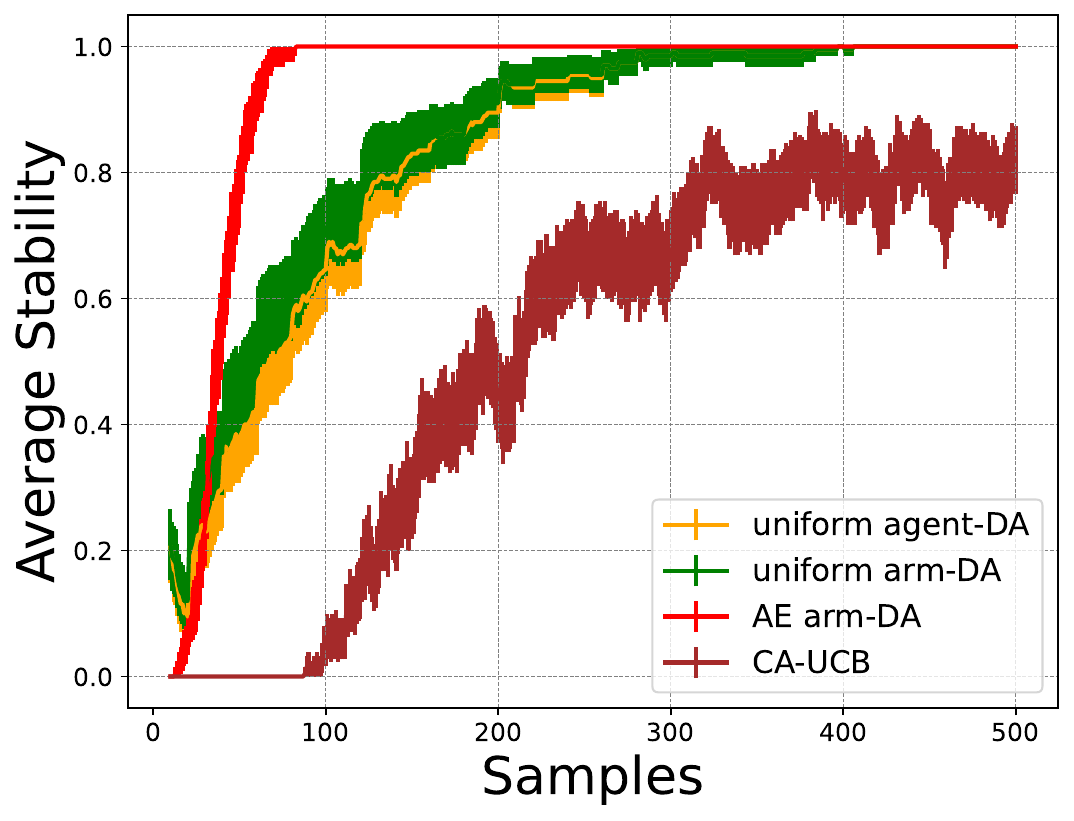} 
    \end{subfigure}
    \hfill 
    \begin{subfigure}{0.28\textwidth} 
        \centering
        \includegraphics[width=\linewidth,height=1.2in]{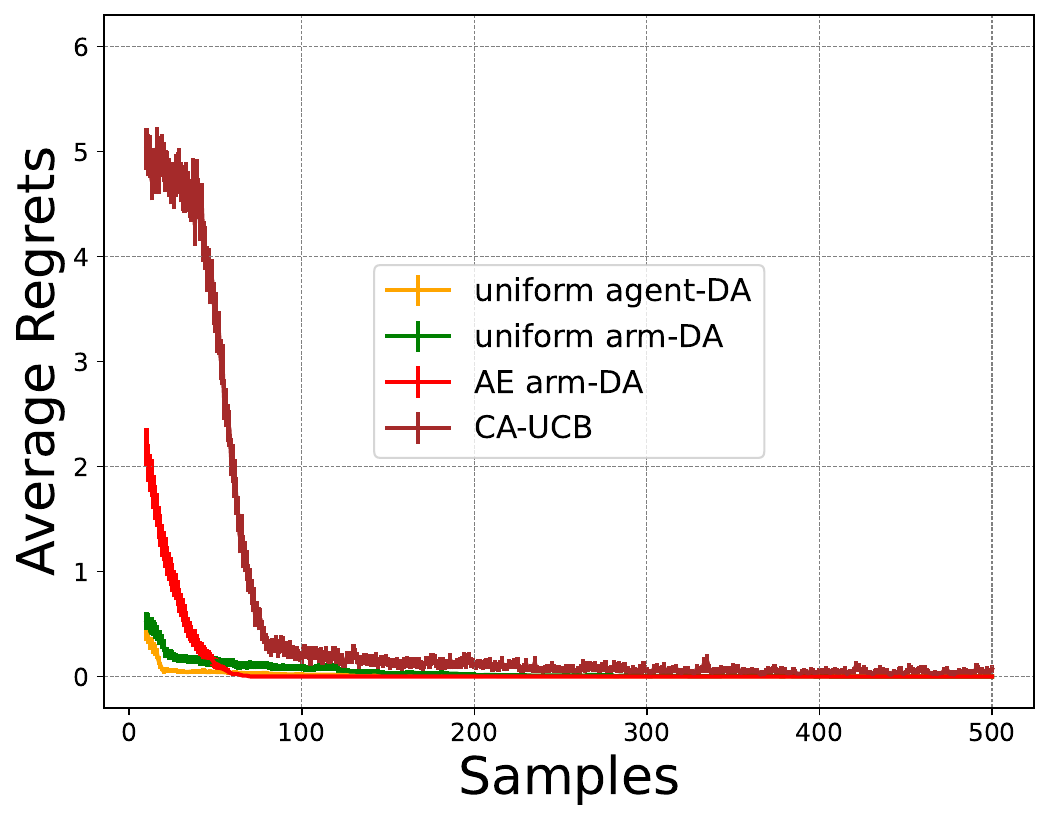} 
    \end{subfigure}
    \hfill 
    \begin{subfigure}{0.28\textwidth} 
        \centering
        \includegraphics[width=\linewidth,height=1.2in]{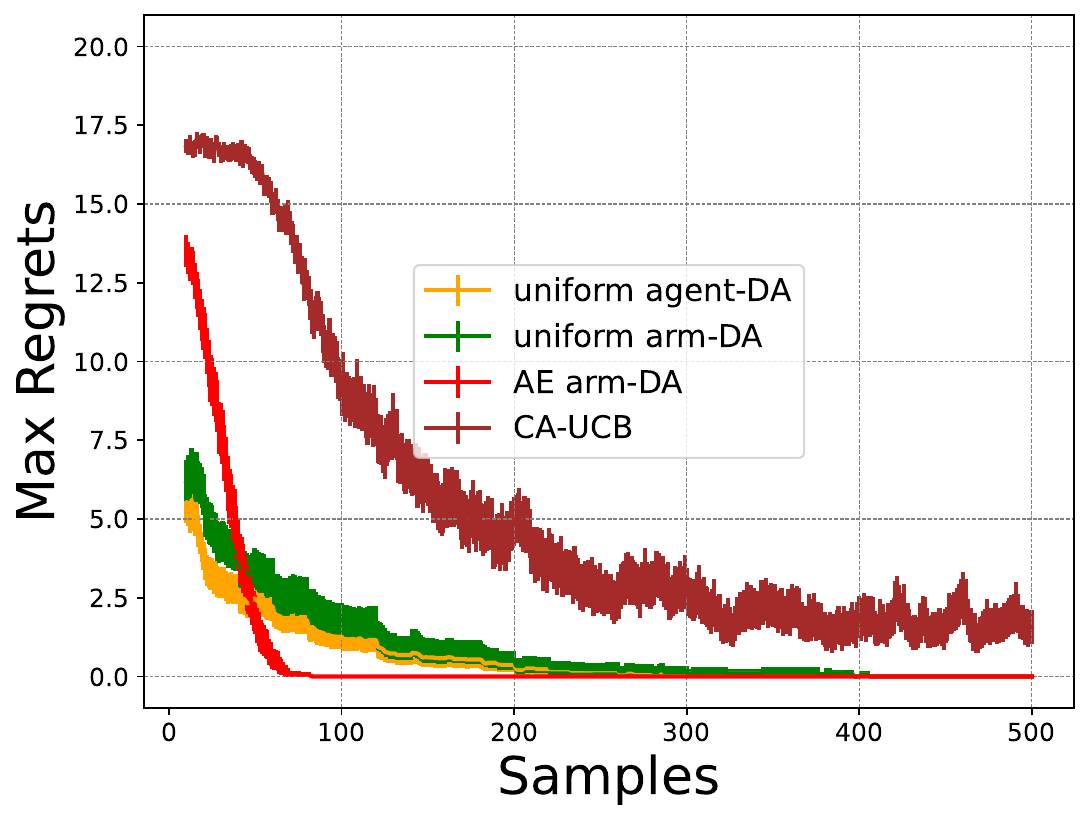} 
    \end{subfigure}
    \caption{95\% confidence interval of stability and regrets for 200 randomized SPC preference profiles. Please see the definition of SPC in \Cref{sec:supplemental-experiment}. An SPC preference profile has a unique stable matching.}
    \label{fig:spc_exp}
\end{figure}

At the first glance, \Cref{fig:general_exp} (the center and the right plots) seems to suggest that the the uniform-arm DA is outperforming the AE-arm DA algorithm. However, note that the regret here is with respect to the agent-optimal solution (i.e. $\overline{R}$); and thus, the AE-arm DA algorithm by design is not optimized to reach that solution. 
Upon further investigation, however, we see that when comparing the two algorithms using the agent-pessimal regret ($\underline{R}$) then the AE-arm DA converges with fewer samples both in terms of average and maximum regrets, as illustrated in \Cref{fig:comp_exp}.
%


\section{Conclusion and Future Work}
\label{sec:conclusion}
\begin{wrapfigure}{r}{0.3\textwidth}
    \centering
    \vspace{-1.3em}
    \begin{subfigure}{0.28\textwidth} 
        \centering        \includegraphics[width=\linewidth,height=1.2in]{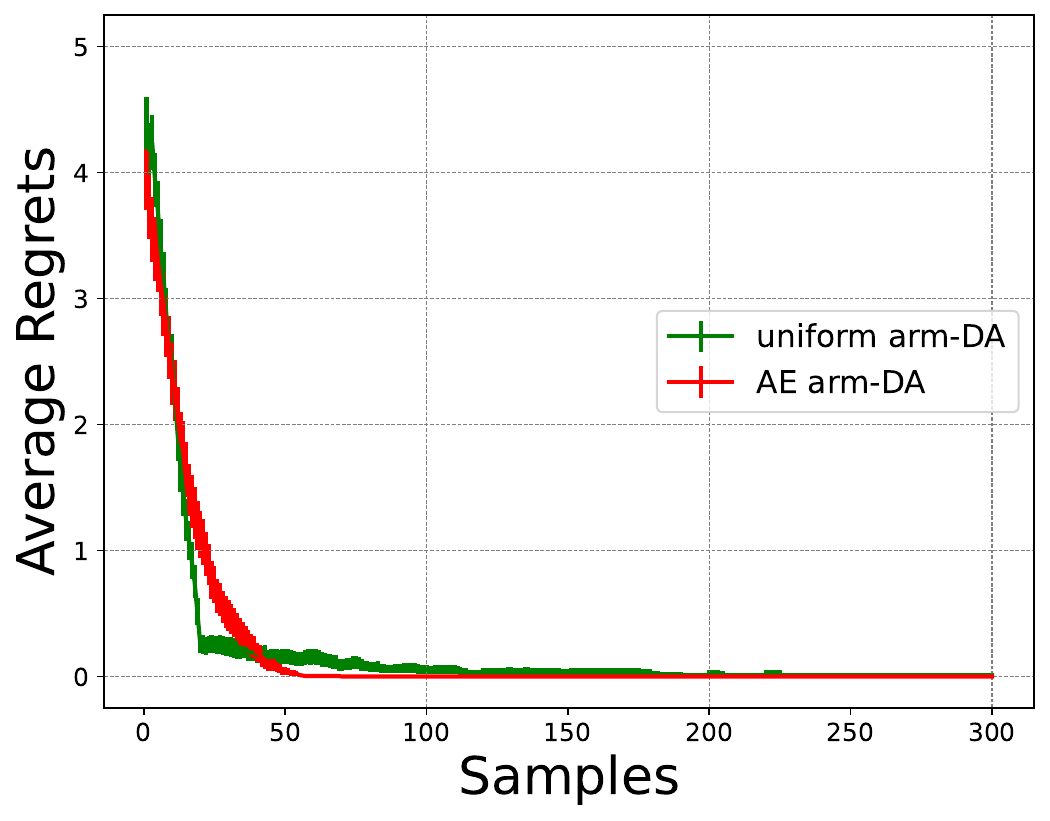} 
    \end{subfigure}
    \hfill 
    \begin{subfigure}{0.28\textwidth} 
        \centering       \includegraphics[width=\linewidth,height=1.2in]{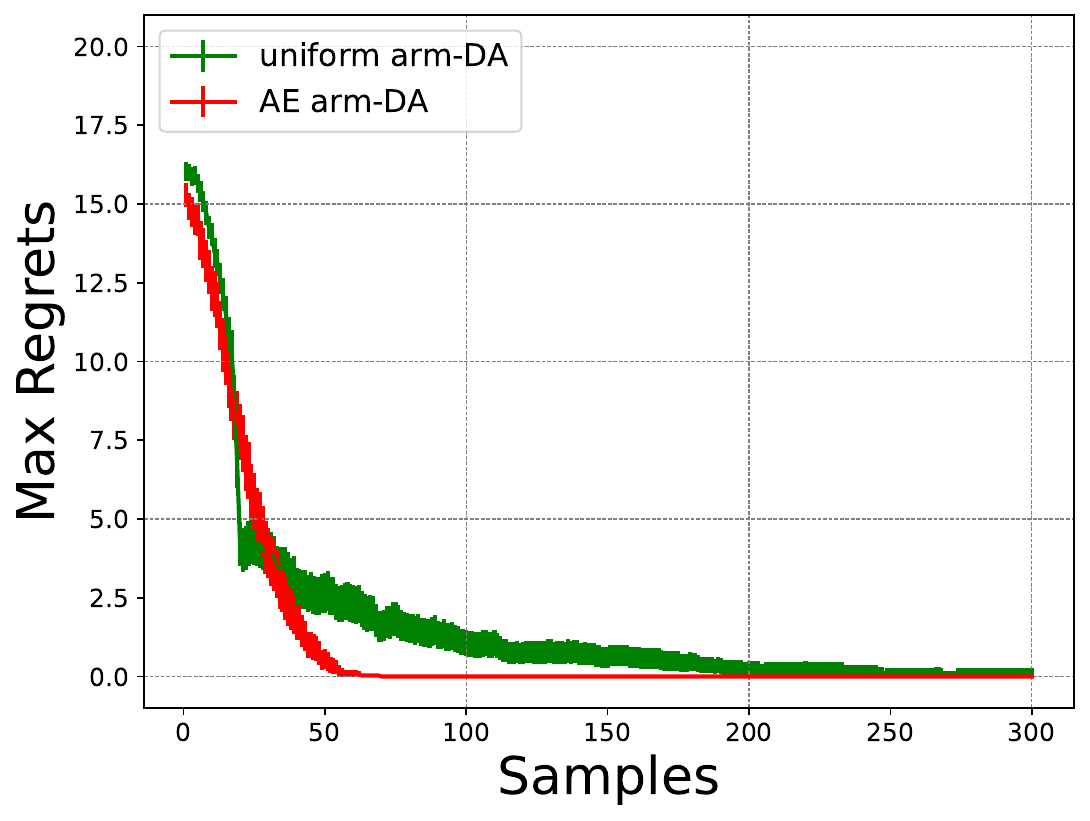}
    \end{subfigure}
    \caption{$95\%$ confidence interval of agent-pessimal stable regrets for $200$ randomized general preference profiles.}
    \label{fig:comp_exp}
\end{wrapfigure}
The game-theoretical properties such as stability in two-sided matching problems are critical indicators of success and sustenance of matching markets; without stability  agents may `scramble' to participate in secondary markets even when \textit{all} preferences are known \citep{kojima2013matching}.
We demonstrated key techniques in learning preferences that rely on the structure of stable solutions. In particular, exploiting the `known' preferences of arms in the arm-proposing variant of DA and eliminating arms early on, provably reduces the sample complexity of finding stable matchings while experimentally having little impact on optimality (measured by regret).
Findings of this paper can have substantial impact in designing new labor markets, school admissions, or healthcare where decisions must be made as preferences are revealed \citep{rastegari2014reasoning}.

We conclude by discussing some limitations and open questions. First, extending this framework to settings with incomplete preferences, ties, or those that go beyond subgaussian utility assumptions are interesting directions for future research. We opted to avoid these nuances, for example ties, as such variations often introduce computational complexity with known preferences.
In addition, given that the number of stable solutions could raise exponentially \citep{knuth1976mariages}, designing learning algorithms that could converge to stable solutions while satisfying some fairness notions (e.g. egalitarian or regret-minimizing) is an intriguing future direction.

\clearpage

\bibliographystyle{plainnat}
\bibliography{reference}


\newpage

\appendix

\section*{Appendix}

\section{Omitted technical lemmas from Section 4}\label{app:sec4}
The following lemmas are useful to prove the stability bounds. 

\begin{lemma}[Property of independent subgaussian, Lemma 5.4 in \cite{lattimore2020bandit}]\label{lem:subgaussian}
Suppose that $X$ is $d$-subgaussian and $X_1$ and $X_2$ are independent and $d_1$ and $d_2$ subgaussian, respectively, then we have the following property:\\
    (1) $Var[X] \le d^2$. \\
    (2) $cX$ is |c|d-subgaussian for all $c \in \mathbbm{R}$. \\
    (3) $X_1 + X_2$ is $\sqrt{d_1^2 + d_2^2}$-subgaussian.
\end{lemma}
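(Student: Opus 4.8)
The plan is to work with the moment-generating-function (MGF) characterization of subgaussianity, which is the form used in the cited Lemma 5.4 of \cite{lattimore2020bandit}: a random variable $Y$ is $d$-subgaussian precisely when $\mathbb{E}[\exp(\lambda Y)] \le \exp(\lambda^2 d^2 / 2)$ for every $\lambda \in \mathbbm{R}$. This MGF bound implies the tail bound in the paper's footnote via a standard Chernoff argument, so it is the natural starting point, and all three parts then follow by manipulating this single inequality.

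For part (2) I would simply substitute: for any $\lambda$, $\mathbb{E}[\exp(\lambda c X)] = \mathbb{E}[\exp((\lambda c) X)] \le \exp((\lambda c)^2 d^2 / 2) = \exp(\lambda^2 (|c| d)^2 / 2)$, which is exactly the defining bound for a $|c|d$-subgaussian variable. For part (3) I would use independence to factor the joint MGF: $\mathbb{E}[\exp(\lambda (X_1 + X_2))] = \mathbb{E}[\exp(\lambda X_1)]\, \mathbb{E}[\exp(\lambda X_2)] \le \exp(\lambda^2 d_1^2 / 2)\exp(\lambda^2 d_2^2 / 2) = \exp(\lambda^2 (d_1^2 + d_2^2)/2)$, identifying $X_1 + X_2$ as $\sqrt{d_1^2 + d_2^2}$-subgaussian. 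These two parts are essentially one-line computations.

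For part (1), the variance bound, I would compare the low-order behavior of the two sides near $\lambda = 0$. First I would note that subgaussianity forces $\mathbb{E}[X] = 0$: by Jensen, $\exp(\lambda \mathbb{E}[X]) \le \mathbb{E}[\exp(\lambda X)] \le \exp(\lambda^2 d^2 / 2)$, so $\lambda \mathbb{E}[X] \le \lambda^2 d^2 / 2$ for all $\lambda$; dividing by $\lambda$ and letting $\lambda \to 0$ from each side pins $\mathbb{E}[X]$ to $0$. Then, writing the expansions $\mathbb{E}[\exp(\lambda X)] = 1 + \tfrac{\lambda^2}{2}\mathbb{E}[X^2] + o(\lambda^2)$ and $\exp(\lambda^2 d^2 / 2) = 1 + \tfrac{\lambda^2}{2} d^2 + o(\lambda^2)$, the inequality at second order yields $\mathbb{E}[X^2] \le d^2$, i.e.\ $\mathrm{Var}[X] = \mathbb{E}[X^2] \le d^2$.

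The main obstacle is the analytic justification in part (1): interchanging expectation with the Taylor expansion (equivalently, differentiating $\mathbb{E}[\exp(\lambda X)]$ under the expectation and evaluating at $0$) requires the MGF to be finite and smooth near the origin. The subgaussian bound guarantees finiteness for all $\lambda$, and standard results on the analyticity of moment generating functions on their domain justify term-by-term differentiation; I would invoke this (or, equivalently, dominated convergence to control the second-order remainder) to make the comparison rigorous. Parts (2) and (3) require no such care.
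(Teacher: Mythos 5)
Your proof is correct and is essentially the standard argument behind the cited result: the paper itself gives no proof of this lemma (it is imported by citation from Lemma 5.4 of \cite{lattimore2020bandit}), and your MGF substitution for (2), MGF factorization under independence for (3), and second-order Taylor comparison for (1) --- after pinning $\mathbb{E}[X]=0$ via Jensen --- reproduce exactly the route taken in that source, with the analyticity caveat you raise handled correctly since the subgaussian MGF bound gives finiteness of $\mathbb{E}[\exp(\lambda X)]$ for all $\lambda$. One choice you made deserves emphasis: the paper's footnote defines $d$-subgaussian by the tail bound $P(|X|>t)\le 2\exp(-t^2/(2d^2))$, under which part (1) with the exact constant $d^2$ would actually fail (a Rademacher variable satisfies that tail bound with $d^2 = 1/(2\ln 2) < 1 = \mathrm{Var}[X]$), so anchoring the proof to the MGF characterization, as you did and as the cited book does, is not merely convenient but necessary for the statement to hold as written.
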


By the property of independent subgaussian random variables, we have the following lemma that bounds the probability of ranking two arms wrongly.
\begin{lemma}\label{lem:comp}
Sample $h_1$ data $\{ X_1,X_2,\ldots,X_{h_1}\}$ i.i.d. from 1-subgaussian with mean $\mu_1$, and $h_2$ data $\{ Y_1,Y_2,\ldots,Y_{h_2}\}$ i.i.d. from 1-subgaussian with mean $\mu_2$, where $\mu_1 < \mu_2$. Two datasets are independent.
Define $\hat{\mu}_1$ and $\hat{\mu}_2$ as the sample mean for two datasets. 
Then we have 
\begin{equation*}
    P(\hat{\mu}_2 < \hat{\mu}_1) \leq exp(-\frac{ (\mu_2 - \mu_1)^2}{2(\frac{1}{h_1} + \frac{1}{h_2})}). 
\end{equation*}
\end{lemma}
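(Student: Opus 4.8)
The plan is to reduce the claim to a one-sided subgaussian tail bound for the difference of the two empirical means. First I would set $Z = \hat{\mu}_1 - \hat{\mu}_2$ and observe that $P(\hat{\mu}_2 < \hat{\mu}_1) = P(Z > 0)$, while $\mathbb{E}[Z] = \mu_1 - \mu_2 = -(\mu_2 - \mu_1)$. Writing $\delta = \mu_2 - \mu_1 > 0$, the event $\{Z > 0\}$ is precisely $\{Z - \mathbb{E}[Z] > \delta\}$, i.e.\ a deviation of the \emph{centered} variable above its mean by exactly $\delta$. This reframing is what lets the gap $\mu_2 - \mu_1$ enter the exponent.

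Next I would identify the subgaussian parameter of the centered variable $Z - \mathbb{E}[Z] = (\hat{\mu}_1 - \mu_1) - (\hat{\mu}_2 - \mu_2)$ using the closure properties of \Cref{lem:subgaussian}. Each $X_i - \mu_1$ is $1$-subgaussian, so by part (2) the scaled term $\tfrac{1}{h_1}(X_i - \mu_1)$ is $\tfrac{1}{h_1}$-subgaussian; since the $X_i$ are independent, part (3) applied $h_1$ times shows $\hat{\mu}_1 - \mu_1 = \tfrac{1}{h_1}\sum_i (X_i - \mu_1)$ is $\sqrt{h_1 \cdot h_1^{-2}} = h_1^{-1/2}$-subgaussian. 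Symmetrically $\hat{\mu}_2 - \mu_2$ is $h_2^{-1/2}$-subgaussian, and by part (2) so is its negation. Because the two samples are independent, a final application of part (3) gives that $Z - \mathbb{E}[Z]$ is $\sqrt{\tfrac{1}{h_1} + \tfrac{1}{h_2}}$-subgaussian.

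Finally I would invoke the one-sided subgaussian tail inequality for a centered $d$-subgaussian variable $W$, namely $P(W > t) \le \exp\!\big(-t^2/(2d^2)\big)$, with $W = Z - \mathbb{E}[Z]$, $t = \delta$, and $d^2 = \tfrac{1}{h_1} + \tfrac{1}{h_2}$, to obtain
\[
P(\hat{\mu}_2 < \hat{\mu}_1) = P\big(Z - \mathbb{E}[Z] > \delta\big) \le \exp\!\Big(-\frac{\delta^2}{2\big(\tfrac{1}{h_1} + \tfrac{1}{h_2}\big)}\Big),
\]
which is exactly the claimed bound after substituting $\delta = \mu_2 - \mu_1$.

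The one genuine subtlety here is the constant: the tail definition quoted in the footnote is the \emph{two-sided} bound $P(|X|>t)\le 2\exp(-t^2/(2d^2))$, and applying it naively via $P(W>\delta)\le P(|W|>\delta)$ would leave a spurious factor of $2$ that the lemma does not have. To get the clean constant I would rely on the moment-generating-function characterization of subgaussianity that underlies \Cref{lem:subgaussian} (following \citet{lattimore2020bandit}), from which the sharp one-sided Chernoff bound above follows directly, avoiding the factor of $2$.
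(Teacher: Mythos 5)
Your proof is correct and takes essentially the same route as the paper's: use the closure properties of \Cref{lem:subgaussian} to show the centered difference of sample means is $\sqrt{\frac{1}{h_1}+\frac{1}{h_2}}$-subgaussian, then apply the one-sided subgaussian tail at deviation $\mu_2-\mu_1$ (the paper writes this as a lower-tail event for $\hat{\mu}_2-\hat{\mu}_1$, which is the same computation with the sign flipped). Your closing remark about the spurious factor of $2$ is a genuine subtlety the paper glosses over (its proof simply says ``by the definition of subgaussian''), and resolving it via the moment-generating-function characterization from \citet{lattimore2020bandit} is exactly the right fix.
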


\begin{proof}
     By \Cref{lem:subgaussian},
     $\hat{\mu}_2 - \hat{\mu}_1$ is $\sqrt{\frac{1}{h_1} + \frac{1}{h_2}}$-subgaussian with mean $\mu_2 - \mu_1$. Thus by the definition of subgaussian 
    \begin{equation*}
     P(\hat{\mu}_2 < \hat{\mu}_1) = P((\hat{\mu}_2 - \hat{\mu}_1) - (\mu_2 - \mu_1) < -(\mu_2 - \mu_1)) \leq exp(-\frac{(\mu_2 - \mu_1)^2}{2(\frac{1}{h_1} + \frac{1}{h_2})}),  
    \end{equation*}
and the proof is complete.    
\end{proof}

A useful technical lemma for bounding the number of samples agents need to find a stable matching is as follows.
\begin{lemma}
    For variables $K \in \mathbb{N}$ and $d > 0$, $\min\{ t \in \mathbb{N}: log(Kt)/t \leq d \} = \Theta (\frac{1}{d}\log(\frac{K}{d}))$.
    \label{lem:tech}
\end{lemma}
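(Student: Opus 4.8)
The plan is to sandwich $t^\star := \min\{t\in\mathbb N : \log(Kt)/t \le d\}$ between two quantities of the form $\tfrac{c}{d}\log(K/d)$, using monotonicity of $f(t):=\log(Kt)/t$ together with a direct substitution. First I would record that, for $K\ge 2$, the map $f$ is non-increasing on the positive integers: differentiating the continuous extension gives $f'(t)=(1-\log(Kt))/t^2$, so $f$ increases only while $Kt<e$; since $e/K\le e/2<2$, $f$ is strictly decreasing on $[2,\infty)$, and the one-line comparison $2\log K\ge \log 2+\log K$ shows $f(1)\ge f(2)$ as well. Consequently $\{t\in\mathbb N: f(t)\le d\}$ is an upper set $[t^\star,\infty)$, so it suffices to exhibit an explicit $t_0$ with $f(t_0)\le d$ (which forces $t^\star\le t_0$) and an explicit $t_1$ with $f(t_1)>d$ (which, by monotonicity, forces $t^\star>t_1$).

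For the upper bound, set $x:=K/d$ (the relevant regime is $x$ large, since $d$ is a small gap parameter), and take $t_0:=\lceil \tfrac{2}{d}\log x\rceil$. Then $Kt_0\approx 2x\log x$, so $\log(Kt_0)=\log 2+\log x+\log\log x+o(1)$ while $t_0\ge \tfrac{2}{d}\log x$, and I would bound
\begin{equation*}
f(t_0)=\frac{\log(Kt_0)}{t_0}\le \frac{d}{2}\cdot\frac{\log 2+\log x+\log\log x}{\log x}\le d,
\end{equation*}
where the final inequality holds as soon as $\log 2+\log\log x\le \log x$, i.e. for all sufficiently large $x$. This yields $t^\star=O\!\big(\tfrac1d\log(K/d)\big)$.

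For the lower bound, take $t_1:=\lfloor \tfrac{1}{2d}\log x\rfloor$, so that $Kt_1\approx \tfrac{x}{2}\log x$ and the symmetric computation gives
\begin{equation*}
f(t_1)=\frac{\log(Kt_1)}{t_1}\ge 2d\,\frac{\log x-\log 2+\log\log x}{\log x}=2d\,(1-o(1))>d
\end{equation*}
for $x$ large, whence $t^\star>t_1=\Omega\!\big(\tfrac1d\log(K/d)\big)$. Combining the two bounds proves $t^\star=\Theta\!\big(\tfrac1d\log(K/d)\big)$. I expect the only delicate point to be bookkeeping the lower-order $\log\log(K/d)$ term against $\log(K/d)$ and choosing the constants ($2$ and $\tfrac12$ above) so that both inequalities hold simultaneously; this is precisely why the statement is asymptotic, and it reflects the mild implicit assumption $K/d\to\infty$ (equivalently $d\ll K$), which is exactly the regime in which the lemma is applied, where $d\asymp\Delta^2/\beta$ is small.
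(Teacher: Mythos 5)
Your proof is correct, but it takes a genuinely different route from the paper's. You prove monotonicity of $f(t)=\log(Kt)/t$ on the positive integers (for $K\ge 2$) and then sandwich the threshold $t^\star$ by evaluating $f$ at two explicit test points, $t_0\approx\frac{2}{d}\log(K/d)$ and $t_1\approx\frac{1}{2d}\log(K/d)$, absorbing the $\log\log(K/d)$ cross terms asymptotically. The paper never establishes monotonicity: it works with the two extremal quantities $T_{\min}=\min\{t\in\mathbb N: f(t)\le d\}$ and $T_{\max}=\max\{t\in\mathbb N: f(t)\ge d\}$, notes $T_{\min}\le T_{\max}+1$, and bootstraps the self-referential inequality $T_{\max}\le\frac1d\log(KT_{\max})$ --- a crude a priori bound $T_{\max}\le K/d^2$ (obtained via $x\ge\log^2 x$) is substituted back into the logarithm to yield $T_{\max}\le\frac2d\log(\frac{2K}{d})$, with the matching lower bound $T_{\min}\ge\frac1d\log(\frac Kd)$ for $K>2$ obtained by the symmetric substitution. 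The trade-off is real: the paper's bootstrap produces explicit constants valid for all $d>0$ once $K>2$, with no ``sufficiently large'' caveat, whereas your $o(1)$ bookkeeping only closes once $K/d$ exceeds some absolute threshold (and you additionally need $t_1\ge 1$, which likewise holds only in that regime); conversely, your argument is more modular --- the monotonicity-plus-test-point scheme transfers verbatim to inverting other threshold functions --- and your explicit hedging is warranted, since the statement as written actually fails at $K=1$ (there $f(1)=0\le d$ forces $t^\star=1$ for every $d$, while $\frac1d\log(1/d)\to\infty$), so a restriction like your $K\ge 2$, or the paper's own $K>2$, is unavoidable, and the lemma is only ever invoked with $d\asymp\Delta^2/\beta$ small, which is exactly your asymptotic regime.
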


\begin{proof}
Define $T_{\min} := \min\{ t \in \mathbb{N}: log(Kt)/t \leq d\} $ and $T_{\max} := \max \{ t \in \mathbb{N}: log(Kt)/t \geq d\}$. We observe that $T_{\min} \le T_{\max} +1$ and thus, we compute the upper bound of $T_{\min}$ through $T_{\max}$. 
By the definition of $T_{\max}$, we have
\begin{equation}
\label{eq:51}
  T_{\max} \le \frac{1}{d}\log(KT_{\max}) \le \frac{1}{d}\log(\frac{K}{d}\log(KT_{\max})).  
\end{equation}
Again by the definition of $T_{\max}$ and the fact that $x \geq log^2(x)$ for any $x > 0$, we have 
\begin{equation}
\label{eq:52}
    \frac{T_{\max}}{K}d^2 \le \frac{\log^2(KT_{\max})}{KT_{\max}} \le 1.
\end{equation}
\Cref{eq:51} and \Cref{eq:52} give
\begin{equation*}
    T_{\max} \le \frac{1}{d} \log(\frac{2K}{d} \log(\frac{K}{d})) = \frac{1}{d} \log(\frac{2K}{d}) + \frac{1}{d} \log(\log(\frac{K}{d})) \le \frac{2}{d}\log(\frac{2K}{d}).
\end{equation*}

On the other hand, we compute the lower bound of $T_{\min}$ by the definition of $T_{\min}$:
\begin{equation*}
    T_{\min} \geq \frac{1}{d}log(KT_{\min}) \geq \frac{1}{d}\log(\frac{K}{d}\log(KT_{\min})).
\end{equation*}
Since $T_{\min}$ is a positive integer, we have that 
\begin{equation*}
    T_{\min} \geq \frac{1}{d}\log(\frac{K}{d}\log(K)) \geq \frac{1}{d}\log(\frac{K}{d})
\end{equation*}
when $K > 2$.
Thus the proof is complete.
\end{proof}

\section{Omitted details from Section 6}
\label{sec:supplemental-experiment}
We define the sequence preference condition (SPC) that we use for our experiments. A preference profile satisfies SPC if and only if there is an order of agents and arms such that $
\forall i \in [N], \forall j > i, \mu_{i,i} > \mu_{i,j}, \textit{and }\forall i \in [K], \forall j > i, a_i >_{b_i} a_j.$
If each participant of one side (agents or arms) has the same preference (also known as a masterlist ) over the other, the preference profile satisfies SPC. 
Clearly, a preference profile that is SPC satisfies $\alpha$-condition. 

We also explain why the uniform arm-DA algorithm surpasses the AE arm-DA algorithm in terms of $\overline{R}$, but underperforms compared to the AE-arm DA algorithm in terms of $\underline{R}$ under general preference profiles.
Note that the uniform arm-DA algorithm samples arms uniformly, and thus it may end up matching agents to arms that are between the agent-optimal stable match and the agent-pessimal stable match.
Thus, when compared with the agent-pessimal regret, it may seem better.
However, in comparison, agents are incrementally matched to increasingly preferable arms throughout the procedure of \Cref{alg:ucb-ae}, and therefore agents are matched to arms that are no better than agent-pessimal stable match as shown in \Cref{fig:general_exp} and \Cref{fig:comp_exp}.
\begin{figure}[t!]
    \centering
    \begin{subfigure}{0.28\textwidth} 
        \centering
        \includegraphics[width=\linewidth,height=1.2in]{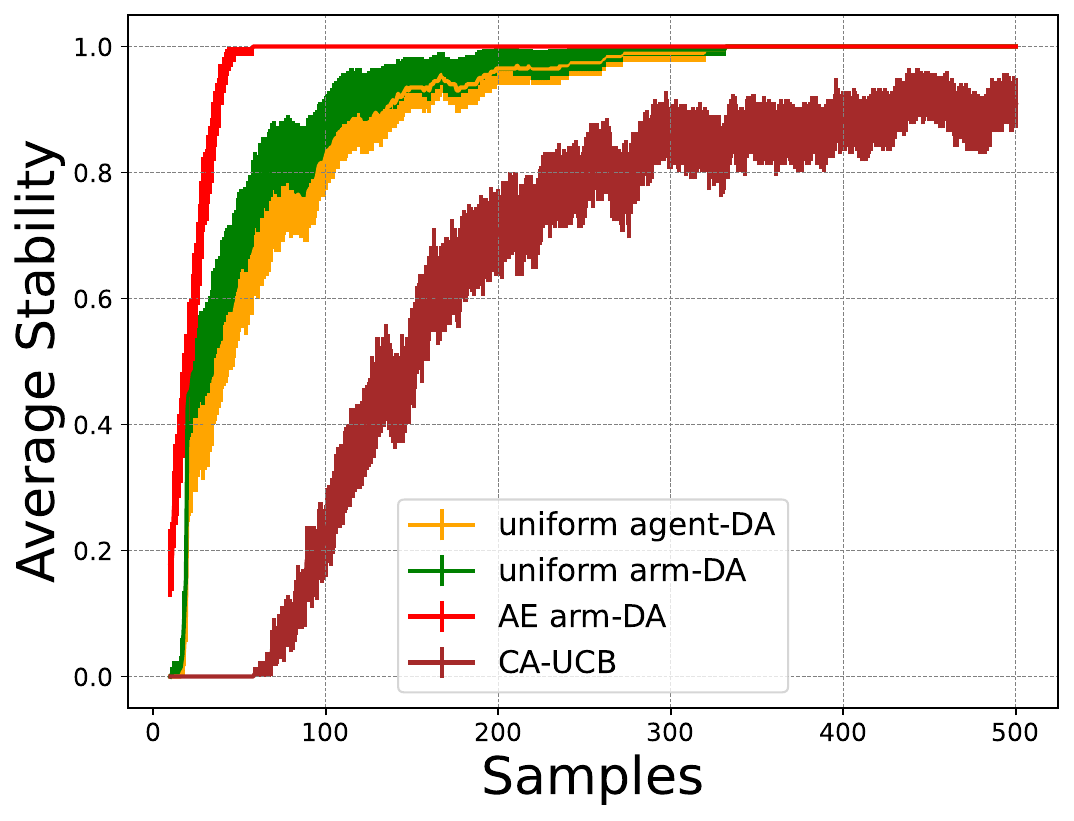} 
    \end{subfigure}
    \hfill 
    \begin{subfigure}{0.28\textwidth} 
        \centering
        \includegraphics[width=\linewidth,height=1.2in]{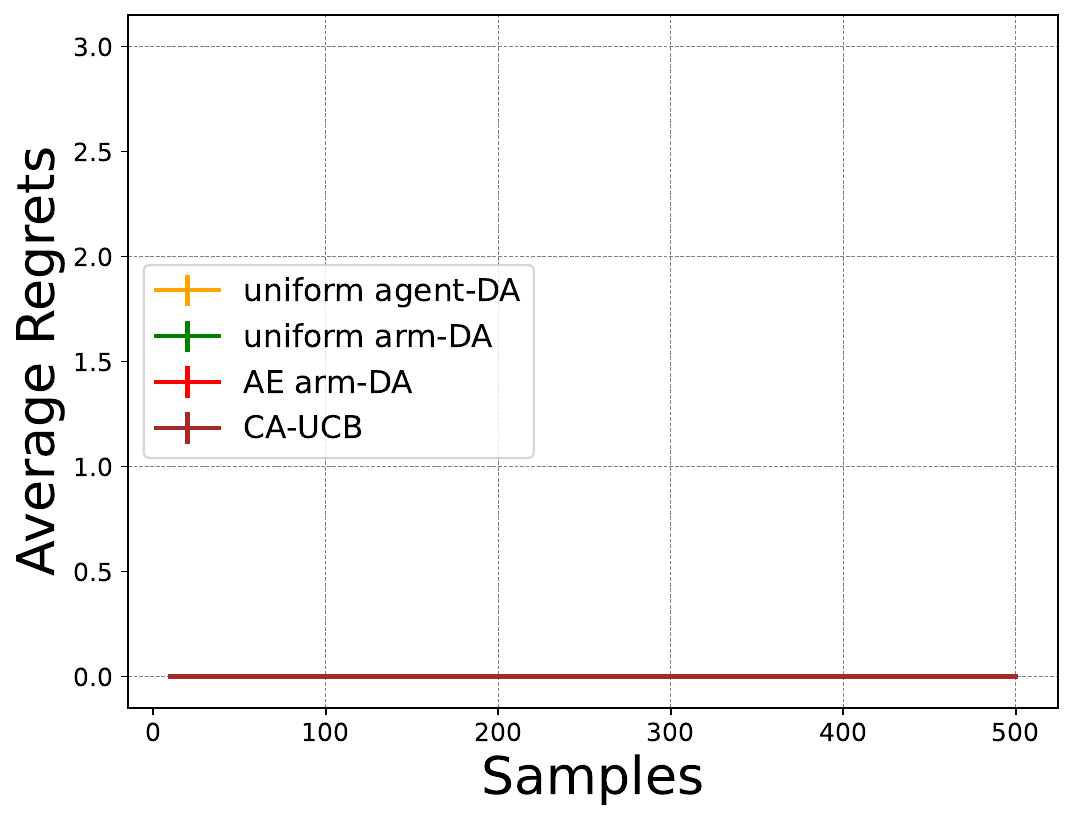} 
    \end{subfigure}
    \hfill 
    \begin{subfigure}{0.28\textwidth} 
        \centering
        \includegraphics[width=\linewidth,height=1.2in]{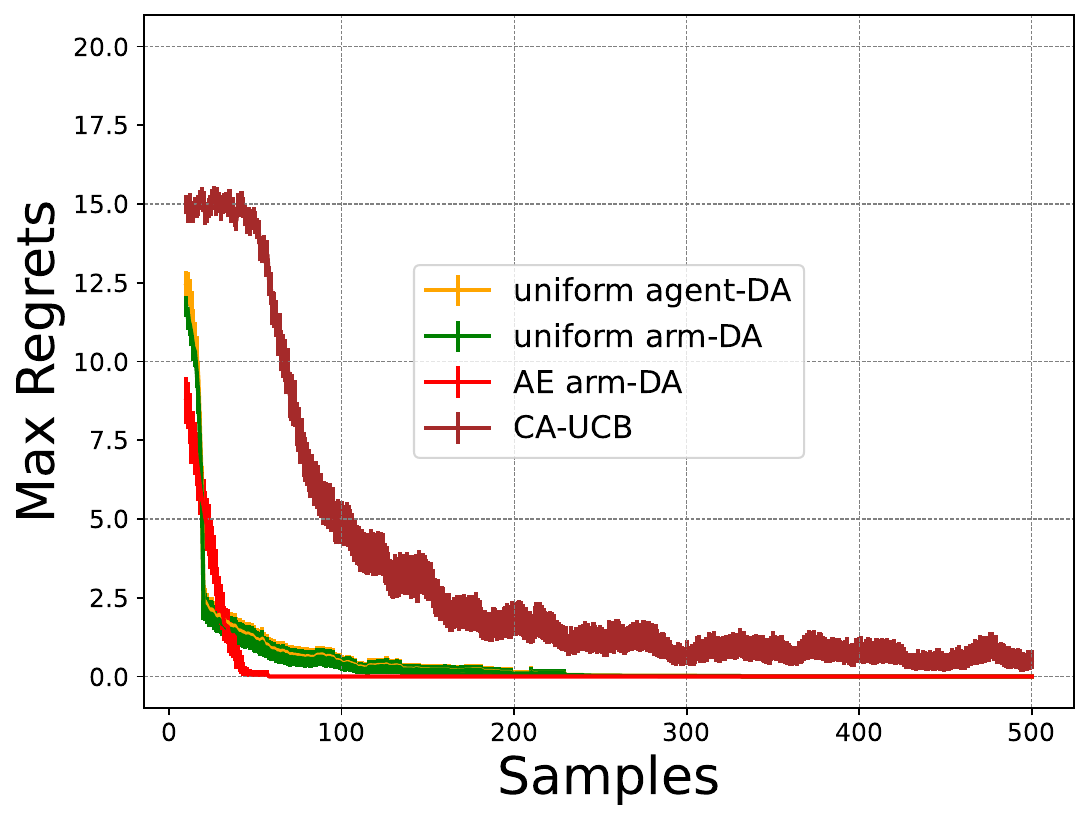} 
    \end{subfigure}
    \caption{$95 \%$ confidence interval of stability and regrets for $200$ randomized agent masterlist preference profiles.}
    \label{fig:master_exp}
\end{figure}



\end{document}